\newcommand{\citet}{\textcite}
\newcommand\independent{\protect\mathpalette{\protect\independenT}{\perp}}
    \def\independenT#1#2{\mathrel{\setbox0\hbox{$#1#2$}%
    \copy0\kern-\wd0\mkern4mu\box0}} 
\newcommand{\E}{\mathbbm{E}}
\renewcommand{\L}{\textrm{L}}
\renewcommand{\P}{\textrm{P}}
\newcommand{\point}[1]{}
\newtheorem{theorem}{Theorem}
\newtheorem{lemma}{Lemma}
\newtheorem{corollary}{Corollary}
\newtheorem{assumption}{Assumption}
\newtheorem{proposition}{Proposition}
\newenvironment{namedassumption}[1]
  {\inneruassumption}
  {\endinneruassumption}
\crefname{figure}{Figure}{Figures}
\crefname{assumption}{Assumption}{Assumptions}
\crefname{inneruassumption}{Assumption}{Assumptions}
\theoremstyle{definition}
\newtheorem{example}{Example}
\newtheorem{remark}{Remark}
\date{February 8, 2022}
\title{\textbf{Difference in Differences with Time-Varying Covariates}}
\author{Carolina Caetano\footnote{Department of Economics, University of Georgia.  \href{mailto:carol.caetano@uga.edu}{carol.caetano@uga.edu}} \and Brantly Callaway\footnote{Department of Economics, University of Georgia.  \href{mailto:brantly.callaway@uga.edu}{brantly.callaway@uga.edu}} \and Stroud Payne\footnote{Department of Economics, University of Georgia. \href{mailto:robert.payne@uga.edu}{robert.payne@uga.edu}} \and Hugo Sant'Anna Rodrigues\footnote{Department of Economics, University of Georgia. \href{mailto:hsantanna@uga.edu}{hsantanna@uga.edu}}}
\begin{document}

\maketitle

\abstract{\noindent This paper considers identification and estimation of causal effect parameters from participating in a binary treatment in a difference in differences (DID) setup when the parallel trends assumption holds after conditioning on observed covariates.  Relative to existing work in the econometrics literature, we consider the case where the value of covariates can change over time and, potentially, where participating in the treatment can affect the covariates themselves.  We propose new empirical strategies in both cases.  We also consider two-way fixed effects (TWFE) regressions that include time-varying regressors, which is the most common way that DID identification strategies are implemented under conditional parallel trends.  We show that, even in the case with only two time periods, these TWFE regressions are not generally robust to (i) time-varying covariates being affected by the treatment, (ii) treatment effects and/or paths of untreated potential outcomes depending on the level of time-varying covariates in addition to only the change in the covariates over time, (iii) treatment effects and/or paths of untreated potential outcomes depending on time-invariant covariates, (iv) treatment effect heterogeneity with respect to observed covariates, and (v) violations of strong functional form assumptions, both for outcomes over time and the propensity score, that are unlikely to be plausible in most DID applications.  Thus, TWFE regressions can deliver misleading estimates of causal effect parameters in a number of empirically relevant cases.  We propose both doubly robust estimands and regression adjustment/imputation strategies that are robust to these issues while not being substantially more challenging to implement.}

\vspace{50pt}

\noindent {\bfseries {JEL Codes:}} C14, C21, C23

\bigskip

\noindent {\bfseries {Keywords:}}  Difference-in-Differences, Time Varying Covariates,  Two-way Fixed Effects Regression, Doubly Robust, Conditional Parallel Trends, Treatment Effect Heterogeneity

\vspace{100pt}

\clearpage

\normalsize

\onehalfspacing

\section{Introduction}

In this paper, we study difference in differences identification strategies where (i) the parallel trends assumption holds only after conditioning on covariates, (ii) some or all of these covariates vary over time, and (iii) some of the time varying covariates could themselves be affected by the treatment.  

A number of papers (e.g., \citet{heckman-ichimura-smith-todd-1998,abadie-2005,santanna-zhao-2020}) show that certain causal effect parameters, typically the average treatment effect on the treated (ATT), are identified under conditional parallel trends assumptions.  These types of conditional parallel trends assumptions are attractive in applications where the path of untreated potential outcomes may differ among units with different characteristics.  However, work in the econometrics literature typically considers the case where covariates involved in the parallel trends assumption either do not vary over time or are ``pre-treatment'' (that is, the value of a time-varying covariate is set to its value in the pre-treatment period; see \citet{bonhomme-sauder-2011,lechner-2011} for some discussions on using pre-treatment values of time-varying covariates).  In contrast, empirical work in economics often only includes covariates that vary over time. In this case, identification must implicitly assume that the treatment does not have an effect on the covariates themselves, which is implausible in some applications.

Covariates that could have been affected by participating in the treatment are often referred to as ``post-treatment'' or as ``bad controls.''  The received wisdom seems to be that this type of covariate should not be included in empirical research.\footnote{For example, \citet{angrist-pischke-2008} discuss ``bad controls'' in the context of deciding whether or not to control for occupation when studying causal effects of graduating from college on earnings.  In that case, occupation is likely to be affected by attending college and, therefore, can make comparisons in earnings among those with the same occupation who graduated or did not graduate from college hard to interpret (even if college were randomly assigned).  \citet{angrist-pischke-2008} note that ``...we would do better to control only for variables that are not themselves caused by education.''  We return to a related example later in this section on the effect of job displacement on earnings where occupation is potentially affected by job displacement.  \point{double check this as I don't have the real version of MHE}}  However, we provide several examples below where it seems important to condition on the value of the covariate that \textit{would have occurred in the absence of the treatment}; in these cases, it would not generally be sufficient to just ``not include'' this sort of covariate.  We propose several different strategies for dealing with time-varying covariates that show up in the parallel trends assumption while also potentially being affected by the treatment.  

Difference in differences identification strategies are most often implemented using two-way fixed effects (TWFE) regressions.  The most common version of a TWFE regression that includes covariates is the following 
\begin{align} \label{eqn:twfe}
    Y_{it} = \theta_t + \eta_i + \alpha D_{it} + X_{it}'\beta + v_{it}
\end{align}
where $\theta_t$ is a time fixed effect, $\eta_i$ is individual-level unobserved heterogeneity (i.e., an individual fixed effect), $D_{it}$ is the treatment indicator, and $X_{it}$ are time varying covariates.   In the TWFE regression in \Cref{eqn:twfe}, $\alpha$ is the parameter of interest and it is often interpreted as ``the causal effect of the treatment'' or at least would be hoped to be a weighted average of underlying heterogeneous treatment effects. Being able to include covariates is one of the main attractions of using a TWFE regression to implement a DID design.  For example, \citet{angrist-pischke-2008} write: ``A second advantage of regression-DD is that it facilitates empirical work with regressors other than switched-on/switched off dummy variables.''\footnote{ \citet{angrist-pischke-2008} also briefly mention ``bad controls'' in the context of difference in differences (Section 5.2.1).}  TWFE regressions have come under much scrutiny in recent work in terms of how well they perform for implementing DID identification strategies.  In particular, TWFE regressions can perform very poorly in the presence of more than two time periods, variation in treatment timing across units, and treatment effect heterogeneity (particularly, treatment effect dynamics); see \citet{goodman-2021,chaisemartin-dhaultfoeuille-2020}. Although with only two time periods, TWFE regressions are known to be reliable under unconditional parallel trends, here we point out a number of problems with TWFE regressions for implementing DID identification strategies that rely on conditional parallel trends assumptions \textit{even in the case with only two time periods}.  

In particular, we show that TWFE regressions can deliver poor estimates of the average treatment effect on the treated (which is the natural target parameter for DID identification strategies) for any of four reasons: (1) time-varying covariates that are themselves affected by the treatment, (2) ATTs and/or parallel trends assumptions that depend on the pre-treatment level of time varying covariates in addition to (or instead of) only the change in the covariates over time, (3) ATTs and/or paths of untreated potential outcomes that depend on time-invariant covariates, and (4) violations of strong functional form assumptions both for outcomes over time and for the propensity score.  All four of these issues are common in applications in economics.  

In applications where none of the four issues mentioned above occur, TWFE regressions deliver a weighted average of conditional ATTs where all the weights are positive.  However, even in this best-case scenario, TWFE regressions still suffer from a ``weight-reversal'' property similar to the one pointed out in \citet{sloczynski-2020} under unconfoundedness with cross-sectional data.  In our case, conditional ATTs for relatively uncommon values of the covariates among the treated group (relative to the untreated group) are given large weights while conditional ATTs for common values of the covariates among the treated group are given small weights.  In order to get around this weight reversal issue, one needs to additionally rule out heterogeneous treatment effects across different values of the covariates.  Adding this condition to the previous four implies that TWFE regressions deliver the ATT; however, we stress that these are a very stringent set of requirements for TWFE regressions to perform well for estimating the ATT when the parallel trends assumption depends on time-varying covariates.  

We propose several new strategies for dealing with time varying covariates that are required for the parallel trends assumption to hold.  When the researcher is confident that the covariates evolve exogenously with respect to the treatment, we provide a doubly robust estimand for the ATT (these arguments are similar to the ones in \citet{santanna-zhao-2020} for the case with time invariant covariates).  Doubly robust estimators have the property that they deliver consistent estimates of the ATT if \textit{either} an outcome regression model or a propensity score model is correctly specified, thus giving researchers an extra chance to correctly specify a model relative to regression adjustment or propensity score weighting strategies.  Besides this, our doubly robust estimands can also be used in the context of the double/debiased machine learning literature where the propensity score and outcome regression model can be estimated using a wide variety of modern machine learning techniques (see \citet{chernozhukov-etal-2018} for the general case and \citet{chang-2020} in the context of DID).\footnote{Using machine learning in this context may be particularly useful because the expressions for the ATT involve conditioning on time-varying covariates across different time periods.  In many applications, time-varying covariates may be highly serially correlated, and it may be challenging to specify simple parametric models involving these covariates in this context.  However, machine learning estimators may perform much better in this context.} 
When the time-varying covariates can be affected by the treatment, we provide sufficient (and easy-to-interpret) conditions under which the strategy of conditioning on ``pre-treatment'' covariates, which is common in the econometrics literature, is justified.  We also discuss other cases where this strategy is not reasonable.  In these cases, we propose regression adjustment-type and doubly robust-type expressions for the ATT.  Finally, when a researcher is willing to make an additional function form assumption for untreated potential outcomes, we propose some even simpler approaches based on regression adjustment (these approaches are also broadly similar to recent ``imputation estimators'' proposed in \citet{liu-wang-xu-2021,gardner-2021,borusyak-jaravel-spiess-2021}).  We also show that stronger functional form assumptions for the model for untreated potential outcomes can allow for parallel trends-type assumptions for the covariates to be sufficient for identification of the ATT.


\bigskip

Before moving into our main arguments, we provide three examples to illustrate the types of questions that we address in the current paper.  We revisit these applications at relevant parts of the paper.

\begin{example}[Stand-your-ground laws] \label{ex:cheng-hoekstra} \citet{cheng-hoekstra-2013} study the effects of stand-your-ground laws on homicides and other crimes.  They use state-level data and exploit variation in the timing of stand-your-ground laws across states in order to identify policy effects.  For some of their results, they condition on time-varying covariates that include state-level demographics, the number of police officers in the state, the number of people incarcerated, median income, poverty rate, and spending on assistance and public welfare.  Although it is debatable whether or not some of the these covariates could be affected by the treatment (particularly the number of police officers and the number of people incarcerated), by running TWFE regressions that include these covariates, \citet{cheng-hoekstra-2013} at least implicitly argue that these covariates evolve exogenously from the treatment.  Whether this is true or not, for exposition purposes we will assume that none of the covariates used in this example are  affected by the treatment.  
\end{example}

\begin{example}[Shelter-in-place orders] \label{ex:covid} A number of recent papers study the effect of shelter-in-place orders on various outcomes including mobility (see, for example, \citet{weill-stigler-deschenes-springborn-2021} and references therein), labor market outcomes (e.g., \citet{gupta-montenovo-nguyen-rojas-schmutte-simon-weinburg-2020}, and consumer spending (e.g., \citet{chetty-friedman-hendren-stepner-2020}). \point{double check this last reference}  \point{none of these papers actually controls for the current number of cases; it would be good to have a paper that does this, but I think researchers are aware of this bad control issue and so are unlikely to include it}  Paths of all of these outcomes (in the absence of shelter-in-place orders) likely depend on the current number of Covid-19 cases due to individuals making different choices about staying at home or continuing to work based on the local ``state'' of the pandemic.  This suggests that parallel trends assumptions ought to condition on the number of Covid-19 cases that would have occurred if the policy had not been implemented.   Moreover, since Covid-related policies are \textit{designed} to affect the number of Covid-19 cases, this would be a case with a time-varying covariate that is likely to be affected by the treatment.

\end{example}

\begin{example}[Job Displacement]
Research on job displacement typically invokes parallel trends assumptions to identify causal effects of job displacement on workers' earnings.  If, in the absence of job displacement, paths of earnings depend on the occupation, industry, or union status of a worker, then it would be desirable to condition on these variables in the parallel trends assumption.  However, most empirical work on job displacement does not condition on these variables, presumably due to each of these possibly being affected by job displacement.\footnote{Some papers do include occupation, industry, and/or union controls in ``robustness checks'' and others study how effects of job displacement vary by whether or not a worker remains in the same industry, occupation, or union status following job displacement which is broadly similar to controlling for each of these (see, for example, \citet{topel-1991,jacobson-lalonde-sullivan-1993,stevens-1997}).}  
    Moreover, \citet{barnette-odongo-reynolds-2021} argue that differences in the distribution of pre-displacement occupations are likely an important explanation for the magnitude of effects of job displacement; similarly, \citet{brand-2006} reports relatively large effects of job displacement on occupation.
    \point{it would be good to have some paper that included occupation even as a robustness check}
\end{example}

The examples above are broadly representative of applications that invoke DID identification assumptions with time varying covariates.  The first example involves time-varying covariates that can reasonably be thought of as evolving exogenously with respect to the treatment.  The following two examples both involve covariates that are potentially affected by the treatment.  Later in the paper, we point out some further conceptual differences between these latter two examples.

\paragraph{Related Literature} \

Our paper shares a similar motivation to \citet{zeldow-hatfield-2021} which considers different possible sources of bias due to controlling for time-varying covariates that are possibly affected by the treatment.  That paper mainly considers how sensitive existing strategies are (e.g., controlling for only pre-treatment covariates or additionally including lagged outcomes) to covariates that can be affected by the treatment.  Relative to that paper, we make explicit assumptions on how the treatment can affect the covariates and, under these extra conditions, are able to propose estimation strategies that are guaranteed to perform well (up to regularity conditions) in those cases.

Our paper is also related to the literature on causal inference with panel data using structural nested mean models (\citet{robins-1997}) and marginal structural models (\citet{robins-hernan-brumback-2000}); see \citet{blackwell-glynn-2018} for a recent review.  These approaches, however, are based on ``sequential ignorability'' assumptions rather than allowing for time-invariant unobserved heterogeneity.  Sequential ignorability implies that treated and untreated potential outcomes are independent of treatment status conditional on pre-treatment values of covariates (and possibly pre-treatment outcomes).\footnote{Another difference between the current paper and much of the sequential ignorability literature is that these papers are typically primarily interested in recovering causal effects of different treatment paths (e.g., where each unit can move into or out of the treatment in each period).  The arguments in our paper could likely be extended in this direction but our main results apply to the case where there are only two time periods and treatment can only take place in the second time period.}  Unlike the bulk of this literature, the current paper focuses on the case where a researcher would like to invoke a parallel trends assumptions -- rather than sequential ignorability -- for identification.  However, the current paper also invokes an additional assumption on how treated and untreated potential covariates are generated; this type of assumption is not made in this literature.  The reason for this is that the timing that we consider differs from what is typically considered in the literature on sequential ignorability; in our case, units potentially become treated, then their covariate realizes (and may itself be affected by treatment) and this covariate needs to be controlled for identification.  By contrast, the sequential ignorability literature typically has the covariate realized first, then the treatment, then the outcome, and controlling for, effectively, the covariate in the previous period is sufficient for identifying parameters of interest.  That said, like the current paper, that literature does take seriously how covariates evolve over time and how participating in the treatment can affect covariates themselves.  Of papers broadly in this literature, the most similar to the current paper is \citet{imai-kim-wang-2018} which focuses on a conditional parallel trends assumption that can hold after conditioning on past values of the covariates as well as past values of the outcome.  

Our paper is also related to the literature on mediation analysis.  Like a mediator, our covariates can be affected by treatment participation.  However, the mediation literature is typically interested in decomposing treatment effects into direct effects of the treatment and indirect effects due to the effect of the treatment on the mediator (see \citet{huber-2020} for a recent review of this literature).  Our paper is less ambitious in that we only seek to identify the overall effect of the treatment on outcomes; the tradeoff is that we are able to generally make weaker assumptions than would be required to separately recover direct and indirect effects of participating in the treatment.  That said, it would be interesting to extend our arguments to additionally identifying direct and indirect effects of participating in the treatment, and it seems likely that existing arguments from the mediation analysis literature could be applied in this case.  Our paper is relatively more similar to \citet{rosenbaum-1984,lechner-2008,flores-lagunes-2009}; these papers consider identification of treatment effect parameters under unconfoundedness (and with cross-sectional data) where the covariates that are required for the unconfoundedness assumption to hold could have been affected by the treatment.  Besides this, our paper is related to a large literature in econometrics on strict exogeneity and pre-determinedness in panel data models (see, for example, \citet{arellano-honore-2001}).

Finally, our results on interpreting TWFE regressions build on work on interpreting cross-sectional regressions under the assumption of unconfoundedness and in the presence of treatment effect heterogeneity; this literature includes \citet{angrist-1998,aronow-samii-2016,sloczynski-2020,goldsmith-hull-kolesar-2021,ishimaru-2021}.  \citet{goodman-2021,chaisemartin-dhaultfoeuille-2020,ishimaru-2022} all provide decompositions of the TWFE regression in \Cref{eqn:twfe}.  In some ways, the decompositions in these papers are more general than our decomposition as they all consider the case with more than two time periods and with variation in treatment timing.  On the other hand, our results zoom in on the ``textbook'' case with exactly two periods and where no one is treated in the first period; our decomposition emphasizes a number of possible limitations of TWFE regressions even in the case with exactly two periods.  Indeed, moving to more complicated cases with more periods and variation in treatment timing would make the case for using TWFE regressions even weaker, as it would introduce additional issues particularly related to using already treated units as comparison units (which can lead to negative weights on underlying treatment effect parameters), as all three papers mentioned above imply.  See \Cref{rem:twfe-decomposition-comparison} below for a more detailed comparison.

\section{Identification} \label{sec:identification}

\subsection*{Notation and Setup}

For this section, we focus on a baseline case where the researcher has access to two time periods of panel data.  We label the second time period $t^*$ and the first time period $t^*-1$, and use $t$ to indicate a generic time period.  In each time period, we observe outcomes $Y_t$, a time-varying covariate $X_t$, and time invariant covariates $Z$.  As is standard in the DID literature, we suppose that no one is treated in the first time period.  We use the binary variable $D$ to indicate whether or not a unit participates in the treatment.  Importantly for our setup, we allow for the possibility that the time varying covariate can itself be affected by the treatment; in order to do this, we define $X_{t}(1)$ to be the value that the covariate would take if a unit participated in the treatment and $X_{t}(0)$ to be the value that the covariate would take if a unit did not participate in the treatment; for simplicity, we often refer to these as ``treated potential covariates'' and ``untreated potential covariates.''  Next, we define treated potential outcomes as $Y_t(1,X_t(1))$ (this is the outcome that a unit would experience in time period $t$ if they participated in the treatment and their covariate took on its value under the treatment) and untreated potential outcomes as $Y_t(0,X_t(0))$ (this is the outcome that a unit would experience in time period $t$ if they did not participate in the treatment and their covariate took its value in the absence in the treatment).  For most of the arguments in the current paper, it is sufficient to use the shorter notation $Y_t(1) := Y_t(1,X_t(1))$ and $Y_t(0) := Y_t(0,X_t(0))$.  In this setup, the observed covariates in each time period are: $X_{t^*} = D X_{t^*}(1) + (1-D)X_{t^*}(0)$ and $X_{t^*-1} = X_{t^*-1}(0)$.  In other words, in the second time period, we observe treated potential covariates for units that participate in the treatment, and we observe untreated potential covariates for units that do not participate in the treatment.  In the first time period, since no units are treated yet, we observe untreated potential covariates for all units.  Likewise, observed outcomes are given by $Y_{t^*} = DY_{t^*}(1) + (1-D)Y_{t^*}(0)$, and $Y_{t^*-1} = Y_{t^*-1}(0)$.

Following the vast majority of the DID literature, we target identifying the average treatment effect on the treated (ATT).  It is given by
\begin{align*}
    ATT = \E[Y_{t^*}(1) - Y_{t^*}(0) | D=1]
\end{align*}
which is the average difference between treated and untreated potential outcomes among the treated group.

Throughout the paper, we make the following assumptions

\begin{assumption}[Random Sampling] \label{ass:sampling} The observed data consists of \\ $\{Y_{it^*}, Y_{it^*-1}, X_{it^*}, X_{it^*-1}, W_{it^*}, W_{it^*-1}, Z_i, D_i \}_{i=1}^n$ which are independent and identically distributed.
\end{assumption}

\begin{assumption}[Conditional Parallel Trends]  \label{ass:conditional-parallel-trends}
\begin{align*}
    \E[\Delta Y_{t^*}(0) | X_{t^*}(0), X_{t^*-1}, Z, D=1] = \E[\Delta Y_{t^*}(0) | X_{t^*}(0), X_{t^*-1}, Z, D=0] \ a.s.
\end{align*}
\end{assumption}

\begin{assumption}[Overlap] \label{ass:overlap} \ 
\begin{itemize}
    \item [(a)] $\P(D=1|X_{t^*},X_{t^*-1},Z) < 1$ a.s. 
    \item [(b)] $\P(D=1|X_{t^*-1},W_{t^*-1},Z) < 1$ a.s. 
\end{itemize}\point{double check}

\end{assumption}

\Cref{ass:sampling} says that we observe iid panel data.  \Cref{ass:conditional-parallel-trends} says that, on average, the path of untreated potential outcomes is the same for the treated group as for the untreated group after conditioning on untreated potential covariates in time period $t^*,$ pre-treatment covariates $X_{t^*-1}$, and time-invariant covariates $Z$.  Relative to standard conditional parallel trends assumptions (\citet{heckman-ichimura-todd-1997,abadie-2005,callaway-santanna-2021}), the set of covariates being conditioned on includes untreated potential covariates which are unobserved for the treated group and therefore can complicate existing identification strategies.  \Cref{ass:overlap} is an overlap assumption, and this typo of assumption is standard in the treatment effects literature.  Part (a) implies that, for any values of $X_{t^*}$, $X_{t^*-1}$, and $Z$, there will be some untreated units with those values of the covariates in the population.  Part (b) is similar but holds for any values of $X_{t^*-1}$, $W_{t^*-1}$, and $Z$. \point{I don't think that we need extra conditions for the arguments in Part (2) of Theorem 1, but be careful here.}

Next, we provide two distinct assumptions for dealing with covariates that vary over time.

\begin{namedassumption}{Cov-Exogeneity} \label{ass:cov-exog} $(X_{t^*}(0) | X_{t^*-1}, Z, D=1) \sim (X_{t^*}(1) | X_{t^*-1}, Z, D=1)$
\end{namedassumption}

\begin{namedassumption}{Cov-Unconfoundedness} \label{ass:cov-unc}
$X_{t^*}(0) \independent D | X_{t^*-1}, W_{t^*-1}, Z$ where $W_{t^*-1}$ is a vector of pre-treatment variables. 
\end{namedassumption}

We call the first assumption covariate exogeneity because it implies that participating in the treatment does not change the distribution of covariates for the treated group.  This assumption is technically weaker than assumptions like, for all units $X_{it^*}(1) = X_{it^*}(0) = X_{it^*}$ though this would certainly be a leading case where this sort of condition might hold.  \Cref{ass:cov-exog} allows for covariates to change values over time, but it imposes that (in distribution) they are not affected by participating in the treatment.  This sort of condition may be reasonable in some applications (e.g., Example 1 above).  In other cases, this assumption may be less reasonable (e.g., Examples 2 and 3 above).

\Cref{ass:cov-unc} is an unconfoundedness assumption for untreated potential covariates.  It allows for the treatment to effect the time varying covariates, but it says that the distribution of untreated potential covariates is the same for the treated group and the untreated group after conditioning on the vector of pre-treatment covariates $(X_{t^*-1}, W_{t^*-1}, Z)$.  This assumption allows us to recover the conditional distribution of untreated potential covariates for the treated group.  This distribution is a key ingredient for identifying the ATT below.

In \Cref{ass:cov-unc}, we allow for the possibility that $W_{t^*-1}$ is empty; in fact, this is a leading case.  In this case, unconfoundedness for untreated potential covariates holds after conditioning on the lag of the time-varying covariates $X_{t^*-1}$ and other time invariant covariates $Z$.  Below, we connect this specific condition to the common practice in the econometrics literature on DID of conditioning on pre-treatment values of time-varying covariates.  With a slight abuse of notation, we also allow for the possibility that $W_{t^*-1}$ includes the lagged outcome $Y_{t^*-1}$.  For example, another interesting case is when $W_{t^*-1} = Y_{t^*-1}$, so that covariate unconfoundedness holds after conditioning on pre-treatment covariates, time invariant covariates, and the pre-treatment outcome.  Interestingly, we show below that, under this condition, both the path of outcomes over time and the lag of the outcome show up in the expression for $ATT$ which is unusual in DID applications (see, \citet{chabe-2017} for related discussion).  In the results below, we provide separate results that invoke either \Cref{ass:cov-exog} or \Cref{ass:cov-unc}.

Next, we state our main identification result.

\begin{theorem} \label{thm:1} Under \Cref{ass:sampling,ass:conditional-parallel-trends},

\begin{itemize}
    \item[(1)] if, in addition, \Cref{ass:cov-exog} and \Cref{ass:overlap}(a) hold, then
    \begin{align*}
        ATT = \E[\Delta Y_{t^*} | D=1] - \E\Big[ \E[\Delta Y_{t^*} | X_{t^*}, X_{t^*-1}, Z, D=0] \big| D=1 \Big].
    \end{align*}
    \item[(2)] if, in addition, \Cref{ass:cov-unc} and \Cref{ass:overlap}(b) hold, then
    \begin{align*}
        \hspace{-.4cm}ATT = \E[\Delta Y_{t^*} | D=1] - \E\Big[ \E\big[ \E[\Delta Y_{t^*} | X_{t^*}, X_{t^*-1}, Z, D=0] \big| X_{t^*-1}, W_{t^*-1}, Z, D=0 \big] \Big| D=1 \Big].
    \end{align*}
\end{itemize}
\end{theorem}

The intuition for part (1) of \Cref{thm:1} is relatively straightforward.  Under the conditional parallel trends assumption and when covariates evolve exogenously, one can recover the ATT by (i) taking the path of outcomes experienced by the treated group and adjusting it by the path of outcomes experienced by the untreated group (conditional on $X_{t^*}$, $X_{t^*-1}$, and $Z$) and then (ii) accounting for differences in the distribution of $X_{t^*}$, $X_{t^*-1}$, and $Z$ across groups.  This result is very similar to existing results with time invariant covariates (e.g., \citet{heckman-ichimura-todd-1997}) as well as \citet{lechner-2011}).

The intuition for part (2) is somewhat more complicated.  The term $\E[\Delta Y_{t^*}|X_{t^*},X_{t^*-1},Z,D=0]$ is the average change in outcomes over time conditional on $X_{t^*}$, $X_{t^*-1}$, and $Z$ among the untreated group.  Under \Cref{ass:conditional-parallel-trends}, this is the path of outcomes that, conditional on $X_{t^*}(0), X_{t^*-1}$, and $Z$, the treated group would have experienced if they had not participated in the treatment.  The next expectation is over the distribution of $X_{t^*}(0)$ (conditional on $X_{t^*-1}, W_{t^*-1}$, and $Z$) for the untreated group.  Under \Cref{ass:cov-unc}, this is the same conditional distribution that $X_{t^*}(0)$ follows for the treated group.  Finally, the outside expectation is over the distribution of $X_{t^*-1}$, $W_{t^*-1}$, and $Z$ for the treated group and, therefore, allows for these variables to be distributed differently in the treated group relative to the untreated group.

\begin{corollary}[Important Special Cases] \label{cor:1} Under Assumptions \ref{ass:sampling}, \ref{ass:conditional-parallel-trends}, \ref{ass:overlap}(b), and \ref{ass:cov-unc},
\begin{itemize}
    \item[(1)] if, in addition, $W_{t^*-1} = \varnothing$, then
    \begin{align*}
        ATT = \E[\Delta Y_{t^*} | D=1] - \E\Big[ \E[\Delta Y_{t^*} | X_{t^*-1}, Z, D=0 ] \Big| D=1\Big].
    \end{align*}
    \item[(2)] if, in addition, $W_{t^*-1} = Y_{t^*-1}$, then 
    \begin{align*}
        \hspace{-.3cm}ATT = \E[\Delta Y_{t^*} | D=1] - \E\Big[ \E\big[ \E[\Delta Y_{t^*} | X_{t^*}, X_{t^*-1}, Z, D=0] \big| X_{t^*-1}, Y_{t^*-1}, Z, D=0 \big] \Big| D=1 \Big].
    \end{align*}
\end{itemize}
\end{corollary}

\Cref{cor:1} provides two important special cases for the results in part (2) of \Cref{thm:1}.  The first part provides a formal justification for the common practice in the econometrics literature on DID with time varying covariates of including only ``pre-treatment'' covariates.  In particular, this result says that, when unconfoundedness holds for the time varying covariate conditional on time-invariant covariates and other pre-treatment covariates, then it is sufficient for the researcher to only ``account for'' pre-treatment and time-invariant covariates in order to recover the ATT.\footnote{We provide the proof of this result in \Cref{app:proofs}.  The proof is relatively straightforward, but it appears to be a new contribution in the literature.}  The second part of \Cref{cor:2} is also interesting in that it relates the ATT to an expression that includes the lagged outcome.  There are a number of papers that explore the idea of including lagged outcomes in a DID framework (e.g., \citet{chabe-2017,imai-kim-wang-2018,zeldow-hatfield-2021}) though it is challenging to provide a justification for including lagged outcomes in DID settings --- our approach justifies the  inclusion of lagged outcomes (in the manner specified in the corollary) in cases where unconfoundedness for the time-varying covariate holds after conditioning on the lag of the outcome variable.

Next, we provide alternative expressions for $ATT$ that are useful for estimation.  

\begin{corollary}[Doubly Robust Expressions for ATT] \label{cor:2}
    Under \Cref{ass:sampling,ass:conditional-parallel-trends}, 
    \begin{itemize}
    \item[(1)] if, in addition, \Cref{ass:cov-exog} and \Cref{ass:overlap}(a) hold, then
    \begin{align} \label{eqn:dr-att-cov-exog}
       \hspace{-1cm} ATT = \E\left[ \left( \frac{D}{\E[D]} - \frac{p(X_{t^*},X_{t^*-1},Z) (1-D)}{\E[D](1-p(X_{t^*},X_{t^*-1},Z))} \right) \left( \Delta Y_{t^*} - \E[\Delta Y_{t^*} | X_{t^*}, X_{t^*-1}, Z, D=0]\right) \right],
    \end{align}
    where $p(X_{t^*},X_{t^*-1},Z) := \P(D=1|X_{t^*},X_{t^*-1},Z)$.
    \item[(2)] if, in addition, \Cref{ass:cov-unc} and \Cref{ass:overlap}(b) hold with $W_{t^*-1} = \varnothing$, then 
    \begin{align*}
        ATT = \E\left[ \left( \frac{D}{\E[D]} - \frac{p(X_{t^*-1},Z) (1-D)}{\E[D](1-p(X_{t^*-1},Z))} \right) \left( \Delta Y_{t^*} - \E[\Delta Y_{t^*} | X_{t^*-1},Z, D=0]\right) \right],
    \end{align*}
    where $p(X_{t^*-1},Z) := \P(D=1|X_{t^*-1},Z)$.
\end{itemize}
\end{corollary}

Both of the expressions in \Cref{cor:2} involve both an outcome regression (the conditional expectation terms in each expression) and a propensity score.  They are both doubly robust in the sense that a researcher can consistently estimate the ATT if \textit{either} the model for the propensity score or the outcome regression model is correctly specified (references on double robustness include  \citet{robins-rotnitzky-zhao-1994,scharfstein-rotnitzky-robins-1999,sloczynski-wooldridge-2018,santanna-zhao-2020}).  Besides this, they also provide a connection to the DID literature on estimating the ATT under conditional parallel trends using double/debiased machine learning; see, in particular, \citet{chang-2020}.  This may be particularly useful in the first case where the propensity score and outcome regression depends on time-varying covariates in both periods.  These can be practically difficult to estimate because, in many cases, $X_{t^*}$ and $X_{t^*-1}$ may be highly collinear.  Conventional methods typically invoke functional form assumptions that impose, for example, that these functionals only depending on $\Delta X_{t^*}.$ As noted below, these sorts of restrictions may be implausible in many applications.

\begin{remark}
  In cases where time-varying covariates may be affected by the treatment, we mainly focus on the case where an unconfoundedness type assumption holds for the time varying covariates.  A natural alternative would be to invoke parallel trends assumptions for the time-varying covariates themselves.  Importantly, though, our above arguments require identifying the entire conditional distribution of $X_{t^*}(0)$ for the treated group (not just its mean).\footnote{In \Cref{sec:imputation}, we propose some alternative approaches where standard parallel trends assumptions for time-varying covariates can be used though these approaches require imposing a linear model for the path of untreated potential outcomes in \Cref{ass:conditional-parallel-trends} that are not used in this section.}   That said, difference in differences approaches that recover the distribution of untreated potential outcomes, such as \citet{callaway-li-2019,callaway-li-oka-2018}, could be applied here (though note that these approaches require additional assumptions).  Likewise, the change-in-changes approach in \citet{athey-imbens-2006,melly-santangelo-2015}, which can recover distributions of untreated potential outcomes, could be applied to the time-varying covariates in this context.  Another potential limitation of these approaches in this context is that they typically only point identify distributions of continuous outcomes and, therefore, would not be very suitable for a number of relevant applications that involve discrete time-varying covariates.
\end{remark}

\begin{remark} Although neither of our assumptions on untreated potential covariates in \Cref{ass:cov-exog,ass:cov-unc} are directly testable, the condition in \Cref{ass:cov-unc} can be ``pre-tested'' --- that is, one can check if it holds in pre-treatment time periods.  One simple idea is to compute pseudo-ATTs in pre-treatment periods; if both \Cref{ass:conditional-parallel-trends} (the conditional parallel trends assumption) and \Cref{ass:cov-unc} hold in pre-treatment periods, then these pseudo-ATTs should be equal to 0.  Alternatively, one can directly pre-test \Cref{ass:cov-unc}: for some pre-treatment period $t$, \Cref{ass:cov-unc} implies that the distribution of $X_t | X_{t-1}, W_{t-1}, Z, D=d$ is the same for both the treated and untreated groups.  This sort of test could be implemented using results from the goodness-of-fit testing literature (e.g., \citet{bierens-1982,stute-1997}).
\end{remark}

\bigskip

To conclude this section, we revisit the three examples from the introduction.

\paragraph{Example 1 (Stand-your-ground, cont'd)} Our example on stand-your-ground laws involved conditioning on time-varying covariates that evolved exogenously with respect to the treatment.  This suggests that this example is most related to our results in part (1) of \Cref{thm:1} and part (1) of \Cref{cor:2}.  In particular, machine learning estimators of the propensity score and outcome regression functions in \Cref{cor:2} are particularly attractive as they do not require strong functional form assumptions on these nuisance functions.\footnote{This particular application uses state-level data, so, in practice, it may be difficult to use machine learning approaches with only 50 or so observations.  See \Cref{sec:imputation} for some more parametric approaches that may be more suitable for applications with limited data.  That said, the more general point here though is that, in cases where covariates evolve exogenously, machine learning estimators, given enough data, are likely to be attractive in many applications.}

\paragraph{Example 2 (Shelter-in-place, cont'd)}  In our example of shelter-in-place orders on various economic outcomes, the parallel trends assumption held after conditioning on the number of Covid-19 cases that would have occurred if the policy had not been implemented.  That is, ``untreated potential Covid-19 cases'' plays the role of $X_{t^*}(0)$ in this case.  \citet{callaway-li-2021b} show that, under a SIRD model --- which is the leading pandemic model in the epidemiology literature --- controlling for the pre-treatment ``state'' of the pandemic is sufficient for unconfoundedness to hold.  That is, the conditions 
in part (1) of \Cref{cor:1} and part (2) of \Cref{cor:2} hold when one wants to control for the number of untreated potential Covid-19 cases.  

\paragraph{Example 3 (Job displacement, cont'd)} Finally, recall our example on the effect of job displacement on earnings where the parallel trends assumption holds only after conditioning on, for example, ``untreated potential occupation'' --- that is, the occupation that a worker would have had if they had not been displaced from their job.  In this case, an unconfoundedness assumption for occupation may be more likely to hold if it conditions on (i) pre-treatment time-varying covariates (including pre-treatment occupation), (ii) time invariant covariates (such as demographics and education), \textit{and} (iii) pre-treatment earnings.  In particular, conditioning on pre-treatment earnings could be important if there are occupation specific wage premiums and high-earning workers are more likely to (in the absence of job displacement) stay in the same occupation over time relative to low-earning workers.  This application would then be covered by the results from part (2) of \Cref{cor:1}.

\section{Interpreting TWFE Regressions} \label{sec:twfe}

In this section, we consider how to interpret $\alpha$ in the TWFE regression in \Cref{eqn:twfe}.  We continue to consider the ``textbook'' case with two time periods where no one is treated in the first time period and where some (but not all) units become treated in the second time period.  This is a best-case for TWFE regressions as it does not introduce well-known problems related to using already-treated units as comparison units that show up when using TWFE regressions with multiple periods, variation in treatment timing, and treatment effect heterogeneity (\citet{goodman-2021,chaisemartin-dhaultfoeuille-2020}).  In the case with exactly two periods, it is helpful to equivalently re-write \Cref{eqn:twfe} as 
\begin{align} \label{eqn:fd}
  \Delta Y_{it^*} = \alpha D_i + \Delta X_{it^*}'\beta + \Delta v_{it^*}
\end{align}
where we define $\Delta X_{t^*} := (1,X_{t^*}-X_{t^*-1})'$ which is the change in the covariate over time and is augmented with an intercept term for the time fixed effect.  We also slightly abuse notation by taking $\beta$ to include an extra parameter in its first position corresponding to the intercept.  Our interest in this section is in determining what kind of conditions are required to interpret $\alpha$ as the ATT or at least as a weighted average of some underlying treatment effect parameters.  

Denote the linear projection of $\Delta Y_{t^*}$ on $\Delta X_{t^*}$ by $\L(\Delta Y_{t^*} | \Delta X_{t^*}) := \Delta X_{t^*}'\E[\Delta X_{t^*} \Delta X_{t^*}']^{-1} \E[\Delta X_{t^*} \Delta Y_{t^*}]$, and define the corresponding projection error $e := \Delta Y_{t^*} - \L(\Delta Y_{t^*} | \Delta X_{t^*})$.  Similarly, define the linear projection of $D$ on $\Delta X_{t^*}$ as $\L(D|\Delta X_{t^*}) := \Delta X_{t^*}'\E[\Delta X_{t^*} \Delta X_{t^*}']^{-1} \E[\Delta X_{t^*} D]$ and the corresponding projection error $u:= D - \L(D|\Delta X_{t^*})$.  Standard Frisch-Waugh type arguments imply that
\begin{align} \label{eqn:alpha}
    \alpha = \frac{\E[De]}{\E[u^2]}
\end{align}

Below, to keep the notation concise, it is useful to define $X^{all}(d) := (X_{t^*}(d), X_{t^*-1}, Z)$.  We also define $ATT_{X^{all}(0)}(X^{all}(0)) := \E[Y_{t^*}(1) - Y_{t^*}(0) | X^{all}(0), D=1]$ which is the ATT conditional on $X_{t^*}(0)$, $X_{t^*-1}$, and $Z$.  And we further define $p(X^{all}(0)) = \P(D=1|X^{all}(0))$.  Next, we state a main result decomposing $\alpha$ from the TWFE regression.

\begin{proposition} \label{prop:twfe} Under Assumptions \ref{ass:sampling}, \ref{ass:conditional-parallel-trends}, and \ref{ass:overlap}(a), 
\begin{align*}
    \alpha &= \E[\omega_{ATT}(X^{all}(0)) ATT_{X^{all}(0)}(X^{all}(0)) | D=1] \\
    & + \E\Bigg[ \sum_{d \in \{0,1\}} \omega_d(X^{all}(0)) \Bigg\{ \Big( \E[\Delta Y_{t^*} | X_{t^*}(0), X_{t^*-1}, Z, D=d] - \E[\Delta Y_{t^*} | X_{t^*}, X_{t^*-1}, Z, D=d] \Big) \tag{A}\\
    & \hspace{100pt} + \Big(\E[\Delta Y_{t^*} | X_{t^*}, X_{t^*-1}, Z, D=d] - \E[\Delta Y_{t^*} | X_{t^*}, X_{t^*-1}, D=d] \Big) \tag{B}\\
    & \hspace{100pt} + \Big(\E[\Delta Y_{t^*} | X_{t^*}, X_{t^*-1}, D=d] - \E[\Delta Y_{t^*} | \Delta X_{t^*}, D=d] ) \tag{C} \\
    & \hspace{100pt} + \Big( \E[\Delta Y_{t^*} | \Delta X_{t^*}, D=d] - \L(\Delta Y_{t^*} | \Delta X_{t^*}, D=d) \Big) \Bigg\} \Big| D=1 \Bigg] \tag{D} \\
    & + \E\left[ \omega_e(X^{all}(0)) \left\{ \L(\Delta Y_{t^*} | \Delta X_{t^*}, D=1) - \L(\Delta Y_{t^*} | \Delta X_{t^*}, D=0)\right\} \big| D=1\right] \tag{E}
\end{align*}
where 
\begin{align*}
    \omega_{ATT}(X^{all}(0)) &:= \frac{1-p(X^{all}(0))}{\E[(1-\L(D|\Delta X_{t^*}))|D=1]} \\
    \omega_d(X^{all}(0)) &:= \frac{d\, p(X^{all}(0)) + (1-d)(1-p(X^{all}(0)))}{\E[(1-\L(D|\Delta X_{t^*}))|D=1]} \\
    \omega_e(X^{all}(0)) &:= \frac{ (p(X^{all}(0)) - \L(D|\Delta X_{t^*}))}{\E[(1-\L(D|\Delta X_{t^*}))|D=1]}
\end{align*}
\end{proposition}

The result in \Cref{prop:twfe} indicates that $\alpha$ is equal to a weighted average of underlying conditional ATTs (we discuss the nature of the weights in more detail below) plus a number of undesirable ``bias'' terms.  We provide formal conditions under which each of these extra terms are equal to zero below.  But, informally, term (A) contains bias from the treatment potentially affecting the covariates in time period $t^*$.  Term (B) comes from ignoring time invariant covariates.  Term (C) comes up when paths of outcomes depend on the levels of time-varying covariates instead of only on the change in covariates over time.  Term (D) arises when the conditional expectation is nonlinear in the change in covariates over time.  Term (E) is conceptually different from terms (A)-(D) and is non-zero when the propensity score is not equal to a linear projection of the treatment on the change in covariates over time.\footnote{Without further assumptions, some of the expressions that involve $X_{t^*}(0)$ are not necessarily identified (this includes $ATT_{X^{all}(0)}(X^{all}(0)), \E[\Delta Y_{t^*}|X_{t^*}(0), X_{t^*-1},Z,D=1]$ and all of the weights as they depend on $p(X^{all}(0))$.  However, if we additionally invoke \Cref{ass:cov-exog}, then all of these terms are identified and Term (A) is equal to zero (see the discussion below for more details along these lines).  The reason we do not invoke this assumption in \Cref{prop:twfe} is to point out that time-varying covariates being affected by the treatment can itself be an additional complication for TWFE regressions.}

Next, we introduce several additional assumptions that are useful for eliminating the bias terms in \Cref{prop:twfe}.  We also use the additional notation:  $ATT_{X_{t^*}(0),X_{t^*-1}}(x_{t^*}(0), x_{t^*-1}) := \E[Y_{t^*}(1) - Y_{t^*}(0) | X_{t^*}(0)=x_{t^*}(0), X_{t^*-1}=x_{t^*-1}]$ and $ATT_{\Delta X_{t^*}(0)}(\Delta x_{t^*}(0)) := \E[Y_{t^*}(1) - Y_{t^*}(0) | \Delta X_{t^*}(0) = \Delta x_{t^*}(0)]$ --- these define different types of conditional ATTs.
\begin{assumption}[Conditional ATTs and parallel trends do not depend on time invariant covariates] \label{ass:bias-z} \ 

    \begin{itemize}
        \item[(a)] 
        $ATT_{X^{all}(0)}(X^{all}(0)) = ATT_{X_{t^*}(0), X_{t^*-1}}(X_{t^*}(0), X_{t^*-1})$ a.s.
        \item[(b)] $\E[\Delta Y_{t^*}(0) | X_{t^*}(0), X_{t^*-1}, Z, D=0] = \E[\Delta Y_{t^*}(0) | X_{t^*}(0), X_{t^*-1}, D=0]$ a.s.
    \end{itemize}
\end{assumption}

\begin{assumption}[Conditional ATTs and parallel trends only depend on change in time-varying covariates] \label{ass:bias-delta} \ 

    \begin{itemize}
        \item[(a)] $ATT_{X_{t^*}(0), X_{t^*-1}}(X_{t^*}(0), X_{t^*-1}) = ATT_{\Delta X_{t^*}(0)}(\Delta X_{t^*}(0))$ a.s.
        \item[(b)] $\E[\Delta Y_{t^*}(0) | X_{t^*}(0), X_{t^*-1}, D=0] = \E[\Delta Y_{t^*}(0) | \Delta X_{t^*}(0), D=0]$ a.s.
    \end{itemize}
\end{assumption}

\begin{assumption}[Linearity of conditional ATTs and paths of untreated potential outcomes] \label{ass:bias-linearity} \ 
\begin{itemize}
    \item [(a)] There exists a $\delta_1$ such that $ATT_{\Delta X_{t^*}(0)}(\Delta X_{t^*}(0)) = \Delta X_{t^*}(0)'\delta_1$
    \item [(b)] There exists a $\delta_0$ such that $\E[\Delta Y_{t^*}(0) | \Delta X_{t^*}(0), D=0] = \Delta X_{t^*}(0)'\delta_0$
\end{itemize}
\end{assumption}

\begin{assumption}[Linearity of propensity score in terms of change in time-varying covariates] \label{ass:bias-pscore} \ 

There exists a $\delta_p$ such that $p(X^{all}(0)) = \Delta X_{t^*}(0)'\delta_p$.
    
\end{assumption}

The first part of \Cref{ass:bias-z} says that, conditional on $X_{t^*}(0)$ and $X_{t^*-1}$, conditional ATTs do not depend on time invariant covariates $Z$.  The second part says that, conditional on $X_{t^*}(0)$ and $X_{t^*-1}$, the path of untreated potential outcomes does not depend on time invariant covariates $Z$.  This implies that the conditional parallel trends assumption holds without conditioning on time invariant covariates (and thus strengthens \Cref{ass:conditional-parallel-trends}).  \Cref{ass:bias-delta} is similar; the first part says that conditional ATTs further only depend on changes in time-varying covariates over time, and the second part says that the conditional parallel trends assumption only depends on the change in time-varying covariates over time rather than their level.

\Cref{ass:bias-linearity} says that conditional ATTs and paths of untreated potential outcomes are linear in changes in untreated potential covariates over time.  Jointly, \Cref{ass:cov-exog,ass:bias-z,ass:bias-delta,ass:bias-linearity} imply that (i) time varying covariates are not affected by the treatment, (ii) that conditional ATTs (conditional on $X_{t^*}(0), X_{t^*-1},$ and $Z$) only depend on the change in time-varying covariates (and not on their levels or time invariant covariates) and are linear in time-varying covariates, and (iii) that the conditional parallel trends assumption in \Cref{ass:conditional-parallel-trends} only depends on the change in time-varying covariates over time (and not on their levels or time invariant covariates) and is linear in time-varying covariates over time.

\Cref{ass:bias-pscore} says that the propensity score (conditional on $X_{t^*}(0)$, $X_{t^*-1}$, and $Z$) is linear in $\Delta X_{t^*}(0)$.  This type of assumption is very common in the literature on interpreting regressions under unconfoundedness with cross-sectional data (e.g., \citet{angrist-1998,aronow-samii-2016,sloczynski-2020,ishimaru-2021}).  In those cases, it sometimes holds by construction (e.g., when the covariates are all discrete and a full set of interactions is included in the model).  In our case, though, it seems particularly implausible as (i) it requires the propensity score to only depend on changes in covariates over time, and (ii) even with fully interacted discrete regressors, the propensity score is unlikely to be linear in changes in the regressors over time.\footnote{For example, suppose that the only covariate is binary.  In the cross-sectional case considered by other papers mentioned above, the propensity score would be linear by construction.  However, the change in the covariate over time would be a single variable that can take the values -1, 0, or 1; moreover, the change in a binary covariate over time is equal to 0 in cases when the covariate is equal to 1 in both periods or when the covariate is equal to 0 in both periods.  This suggests that the propensity score would not be linear in the change in covariates over time even in this very simple case. \point{make sure Carol thinks this is a good example}} 

\begin{proposition} \label{prop:twfe-extra-assumptions} Under Assumptions \ref{ass:sampling}, \ref{ass:conditional-parallel-trends}, \ref{ass:overlap}(a), \ref{ass:cov-exog}, \ref{ass:bias-z}, \ref{ass:bias-delta}, and \ref{ass:bias-linearity}, 
\begin{align*}
    \alpha &= \E[\omega_{ATT}(X^{all}(0)) ATT_{X^{all}(0)}(X^{all}(0)) | D=1] \\
    & \hspace{25pt} + \E\left[ \omega_e(X^{all}(0)) \left\{ \L(\Delta Y_{t^*} | \Delta X_{t^*}, D=1) - \L(\Delta Y_{t^*} | \Delta X_{t^*}, D=0)\right\} \big| D=1\right] \tag{E}
\end{align*}
where $\omega_{ATT}$ and $\omega_e$ are defined in \Cref{prop:twfe}.

\begin{itemize}
    \item [(a)] If, in addition, \Cref{ass:bias-pscore} holds, then
    \begin{align*}
    \alpha &= \E[\omega_{ATT}(X^{all}(0)) ATT_{X^{all}(0)}(X^{all}(0)) | D=1]
    \end{align*}
    and $\E[\omega_{ATT}(X^{all}(0)) | D=1] = 1$.

    \item[(b)] If, in addition, \Cref{ass:bias-pscore} holds and $ATT_{\Delta X_{t^*}(0)}(\Delta x_{t^*}(0))$ is the same across all values of $\Delta x_{t^*}(0)$, then
    \begin{align*}
        \alpha = ATT
    \end{align*}
\end{itemize}
\end{proposition}

The proof of \Cref{prop:twfe-extra-assumptions} is provided in \Cref{app:proofs}.  In the proof, we provide more specific results on which conditions are required for each term in terms (A)-(D) in \Cref{prop:twfe} to be equal to 0.

The result in \Cref{prop:twfe-extra-assumptions} suggests a number of potential issues with the TWFE regressions as in \Cref{eqn:twfe}.  First, even if one is willing to maintain the additional assumptions in \Cref{ass:cov-exog,ass:bias-z,ass:bias-delta,ass:bias-linearity} (which are likely to be very strong in most applications), $\alpha$ from the TWFE regression is still hard to interpret.  Maintaining these additional assumptions (particularly, \Cref{ass:cov-exog}) implies that all of the weights, conditional ATTs, and linear projections in the first part of \Cref{prop:twfe-extra-assumptions} are identified and directly estimable.  That said, the weights on conditional ATTs, $\omega_{ATT}$, do not have the property that they have mean one and the nuisance expression in term (E) may be non-negligible.  

The second part of \Cref{prop:twfe-extra-assumptions} says that, if we are willing to assume that the propensity score is equal to the linear projection of the treatment on the change in time-varying covariates over time, then the weights on conditional ATTs will have mean one and the nuisance expression in term (E) will be equal to zero.  Even in this case, the weights have a ``weight-reversal'' property analogous to the one pointed out in \citet{sloczynski-2020} in the context of unconfoundedness and cross-sectional data.  What this means is that conditional ATTs are given more weight for values of the covariates that are relatively uncommon among the treated group relative to the untreated group; and that conditional ATTs are given less weight for values of the covariates that are relatively common among the treated group relative to the untreated group.  

Finally, if in addition to all the previous conditions, conditional ATTs are constant across different values of the covariates, then $\alpha$ will be equal to the $ATT$.  This is a treatment effect homogeneity condition with respect to the covariates.  It \textit{is} somewhat weaker than individual-level treatment effect homogeneity and it allows for treatment effects to still be systematically different for treated units relative to untreated units; instead, for the treated group, treatment effects cannot be systematically different across different values of the covariates.

These results are much different from our earlier results in \Cref{sec:identification}.  Those results did not require any of the additional assumptions in \Cref{prop:twfe-extra-assumptions}.  In fact, when covariates evolve exogenously with respect to the treatment (as under \Cref{ass:cov-exog}), then the doubly robust expressions for the ATT in part (1) of \Cref{cor:2} only require that either the propensity score or the outcome regression model be correctly specified; in cases where these are estimated using machine learning, even these parametric assumptions can be substantially relaxed.  Moreover, in contrast with the TWFE regressions considered in this section, our earlier additional results can accommodate cases where the time-varying covariates are affected by the treatment.

\begin{remark}
  It is worth pointing out that all of the extra conditions considered in \Cref{prop:twfe-extra-assumptions} are sufficient conditions rather than necessary conditions.  For example, it would be possible for some violations of these assumptions to ``offset'' each other so that $\alpha$ \textit{happens} to be equal to ATT.  That said, there is no reason to expect this to happen in applications.
\end{remark}


\begin{remark} \label{rem:twfe-decomposition-comparison}
  The result in \Cref{prop:twfe} is related to several other decompositions of TWFE regressions that include covariates.  All of these papers consider the same TWFE regression that we do in \Cref{eqn:twfe}.  They also consider the more general case where there are more than two time periods and allow for variation in treatment timing.  \citet[Online Appendix 3.3]{chaisemartin-dhaultfoeuille-2020} show that, under a conditional parallel trends assumption that involves only changes in observed covariates and linearity assumptions that their main results related to multiple periods and variation in treatment timing essentially continue to apply.  That said, this suggests that in the two period case that we consider, TWFE regressions would recover the ATT.  The explanation for this difference is that we do not impose those extra conditions in \Cref{prop:twfe}.   \citet[Section 5.2]{goodman-2021} provides a decomposition of $\alpha$ into a ``within'' component and ``between'' component.  The between component arises due to variation in treatment timing, and, therefore does not show up in our case.  The within component is analogous to our expression for $\alpha$ in \Cref{eqn:alpha}.  Relative to \citet{goodman-2021}, we further decompose this term into a number of more primitive objects that highlight that researchers should be careful in interpreting ``within'' components as averages of causal effects unless they are willing to invoke extra assumptions.  Finally, \citet[Section 2.2]{ishimaru-2022}, like \citet{chaisemartin-dhaultfoeuille-2020}, provides conditions under which TWFE regressions that include covariates can be interpreted as weighted averages of underlying treatment effect parameters (though, in both papers, the weights can be negative).  These include a version of conditional parallel trends that holds when one conditions on the change in covariates over time\footnote{\citet{ishimaru-2022} does point out that ``conditioning on [changes in time-varying covariates] may not be sufficient to make parallel trends plausible.''} and an assumption on linearity of the propensity score conditional on changes in observed covariates over time.\footnote{Another way that the decomposition in \citet{ishimaru-2022} is more general than the one in the current paper is that paper does not require the treatment to be binary.  \citet{ishimaru-2022} also considers an interesting extension on decomposing a modified TWFE regression that additionally includes time-varying coefficients on time-varying coefficients.  Based on his result, it seems likely that this sort of regression would not suffer from issues related to parallel trends depending on the levels of time-varying covariates rather than only changes in time-varying covariates over time.  However, it appears that this regression would still suffer from the other issues mentioned in this section; that said, this is a distinct regression from the TWFE regression in \Cref{eqn:twfe} that is much more commonly used in empirical work in economics.}  None of these papers explicitly address the issue of time-varying covariates potentially being affected by the treatment.\footnote{\citet{goodman-2021} does remark that ``Note that for covariates to aid in identification, [time-varying covariates] must be unaffected by the treatment to avoid bias from `conditioning on a post-treatment variable'.''}
\end{remark}

\section{Alternative Regression Adjustment/Imputation Approaches} \label{sec:imputation}

In this section, we provide several alternative strategies that involve stronger parametric assumptions on the path of untreated potential outcomes than we made in \Cref{sec:identification}.  The approaches discussed in this section are generally simpler to estimate than would be the case for the expressions coming from  \Cref{sec:identification} and, in some cases, can allow for weaker (or at least alternative) assumptions on how the treatment affects time-varying covariates.  The strategies that we propose in this section are also able to avoid the issues with TWFE regressions pointed out in \Cref{sec:twfe}, and (when desired) can allow for the possibility that the treatment has an effect on the covariates.  

The ideas in this section are broadly similar to regression adjustment strategies in the treatment effects literature (see, for example, \citet[Section 5.3]{imbens-wooldridge-2009}) and the imputation estimators proposed in \citet{liu-wang-xu-2021,gardner-2021,borusyak-jaravel-spiess-2021} though they allow for (i) time-varying effects of time varying covariates and (ii) the possibility that the treatment directly affects the time-varying covariates.

To start with, it is well known (e.g. \citet{blundell-dias-2009,gardner-2021,borusyak-jaravel-spiess-2021}) that there is a close connection between unconditional parallel trends assumptions and the following model for untreated potential outcomes
\begin{align*}
    Y_{it}(0) = \theta_t + \eta_i + v_{it}
\end{align*}
where $\theta_t$ is a time fixed effect, $\eta_i$ is time invariant unobserved heterogeneity (i.e., an individual fixed effect), and $v_{it}$ are idiosyncratic, time varying unobservables.  An unconditional version of parallel trends holds in this model for untreated potential outcomes under the condition that $\E[\Delta v_{t} | D=1] = \E[\Delta v_t | D=0]$ for all time periods (this would hold by construction if $v_t$ is independent of treatment status in all time periods), but allows for $\eta$ to be distributed differently across groups and does not impose any modeling assumptions on treated potential outcomes.

As discussed above, the econometrics literature on difference in differences often considers the case where the covariates in the parallel trends assumption are time invariant.  In that case, the analogous model for untreated potential outcomes is given by

\begin{align*}
    Y_{it}(0) = g_t(Z_i) + \eta_i + v_{it}
\end{align*}
where the distribution of $\eta$ can vary across groups (as well as vary with $Z$) and the key condition for the conditional parallel trends assumption to hold is that $\E[\Delta v_t | Z, D=1] = \E[\Delta v_t | Z, D=0]$ (see, for example, \citet{heckman-ichimura-todd-1997} for a discussion of this kind of model).\footnote{To see this, notice that $\E[\Delta Y_t(0) | Z, D=1] = g_t(Z) - g_{t-1}(Z) = \E[\Delta Y_t(0) | Z, D=0]$ which implies that conditional parallel trends holds.}

In this setup, the main challenge is estimating $g_t(z)$ (though note that this is a practical, estimation challenge rather than an identification challenge).  The natural way to parameterize this model is
\begin{align} \label{eqn:linear-y0}
    Y_{it}(0) = Z_i'\delta_t + \eta_i + v_{it}
\end{align}
where we now take $Z$ to include an intercept (and, therefore, $\delta_t$ absorbs the time fixed effect).  Given this framework, $ATT=\E[\Delta Y_{t^*} | D=1] - \E[Z|D=1]'\delta_{t^*}^*$ where $\delta_{t^*}^* := (\delta_{t^*} - \delta_{t^*-1})$ which can be consistently estimated from the regression of $\Delta Y_{t^*}$ on $Z$ using only observations from the untreated group.\footnote{This is closely related to regression adjustment estimators (see, for example, \citet{heckman-ichimura-smith-todd-1998,imbens-wooldridge-2009,santanna-zhao-2020} for related discussion).   An alternative strategy would be to estimate the ATT by $n_1^{-1} \sum_{i=1,D_i=1}^n (Y_{it^*} - \hat{Y}_{it^*}(0))$ where $n_1$ is the number of treated observations and $\hat{Y}_{it^*}(0)$ is an imputed untreated potential outcome given by $\hat{Y}_{it^*}(0) = Y_{it^*-1} + Z_i'\hat{\delta}_{t^*}^*$.  This imputation estimator is numerically equal to the regression adjustment estimator, but the imputation formulation is convenient particularly in the case with multiple periods and variation in treatment timing; see \Cref{rem:multi-periods} below for more details.}

The same sort of arguments imply that, when there are some covariates that vary over time (as above, we consider the case of a single time-varying covariate but note that it is straightforward to extend these arguments to cases with more time-varying covariates), a natural motivating model is
\begin{align*}
    Y_{it}(0) = g_t(Z_i, X_{it}(0)) + \eta_i + v_{it}
\end{align*}
which implies that
\begin{align*}
    \Delta Y_{it^*}(0) = g_{t^*}(Z_i, X_{it^*}(0)) - g_{t^*-1}(Z_i,X_{it^*-1}) + \Delta v_{it^*}
\end{align*}
Moreover, the same sorts of arguments as above imply that \Cref{ass:conditional-parallel-trends} holds in this model.  Similar to the previous case, the main practical challenge is that $g_t(z,x_t(0))$ is likely to be challenging to estimate.  Like the previous case, the natural way to parameterize this model is
\begin{align*}
    Y_{it}(0) = Z_i'\delta_t + X_{it}(0)\beta_t + \eta_i + v_{it}
\end{align*}
which implies that
\begin{align} \label{eqn:imputation}
    \Delta Y_{it^*}(0) = Z_i'\delta_{t^*}^* + \Delta X_{it^*}(0)\beta_{t^*} + X_{it^*-1}(0) \beta_{t^*}^* + \Delta v_{it^*}
\end{align}
where $\beta_{t^*}^* := (\beta_{t^*} - \beta_{t^*-1})$.  Notice that, because untreated potential outcomes and untreated potential covariates are observed for the untreated group, the parameters in the model above can be recovered from a regression of the change in outcomes over time on time invariant covariates, the change in time varying covariates, and the level of the time varying covariates in the pre-treatment period.  The model in \Cref{eqn:imputation} is conceptually appealing as (up to the parametric assumptions) it compares units with both the same initial level of the time-varying covariate and that have the same change in time-varying covariates over time.

Although it is straightforward to recover the parameters in \Cref{eqn:imputation}, recall that, 
\begin{align*}
    ATT &= \E[\Delta Y_{t^*} | D=1] - \E[\Delta Y_{t^*}(0) | D=1] \\
    &= \E[\Delta Y_{t^*} | D=1] - \Big(\E[Z|D=1]'\delta_{t^*}^* + \E[\Delta X_{t^*}(0) | D=1] \beta_{t^*} + \E[X_{t^*-1}(0)|D=1]\beta_{t^*}^*\Big)
\end{align*}
Given that the parameters are identified, every term is identified in this expression except for $\E[\Delta X_{t^*}(0) | D=1]$ (because $X_{t^*}(0)$ is not observed for the treated group).  We briefly consider six settings for recovering $\E[\Delta X_{t^*}(0)|D=1]$ --- three of these come from the assumptions we have already considered for untreated potential covariates and three involve parallel trends assumptions for untreated potential covariates.  Several of these cases involve averaging over conditional expectations of $\Delta X_t(0)$. In this section we additionally impose linear models for these conditional expectations; under this extra condition, researchers are able to estimate ATT while potentially allowing for the treatment to affect time-varying covariates using only regressions and averaging.  

\paragraph{Case 1: \Cref{ass:cov-exog} holds} \ 

Under \Cref{ass:cov-exog}, $\E[\Delta X_{t^*}(0) | D=1] = \E[\Delta X_{t^*}(1) | D=1] = \E[\Delta X_{t^*} | D=1]$.  That is, when covariates evolve exogenously with respect to the treatment, we can replace the average change in untreated potential covariates for the treated group with the average change in covariates actually experienced by the treated group.

\paragraph{Case 2: \Cref{ass:cov-unc} holds conditional on $\bm{(Z,X_{t^*-1})}$} \ 

In this case, if we are willing to assume the following linear model for untreated potential covariates
\begin{align*}
    \Delta X_{it^*}(0) = Z_i'\gamma_{t^*} + X_{it^*-1}\lambda_{t^*} + u_{it^*}
\end{align*}
where it follows by the conditions in this case that $\E[u_{t^*}|Z,X_{t^*-1},D=d]=0$ for $d \in \{0,1\}$.  Plugging this expression into \Cref{eqn:imputation} implies that
\begin{align*}
    \Delta Y_{it^*}(0) &= Z_i'(\delta^*_{t^*} + \gamma_{t^*} \beta_{t^*}) + X_{it^*-1}(0)(\beta^*_{t^*} + \lambda_{t^*} \beta_{t^*}) + \beta_{t^*} u_{it^*} + \Delta v_{it^*} \\
    &:= Z_i'\delta^*_{2,t^*} + X_{it^*-1}(0) \beta^*_{2,t^*} + v_{2,it^*}
\end{align*}
where $\delta^*_{2,t^*} := \delta^*_{t^*} + \gamma_{t^*} \beta_{t^*}$, $\beta^*_{2,t^*} := \beta^*_{t^*} + \lambda_{t^*} \beta_{t^*}$ and $v_{2,it^*} := \beta_{t^*} u_{it^*} + \Delta v_{it^*}$.  Further, notice that $\E[v_{2,t^*} | Z, X_{t^*-1}, D=d] = 0$ for $d \in \{0,1\}$.  Thus, in this case, one can estimate $\delta^*_{2,t^*}$ and $\beta^*_{2,t^*}$ from a regression of the change in outcomes over time using the untreated group, and then estimate the ATT from the sample analogue of
\begin{align*}
    ATT = \E[\Delta Y_{t^*} | D=1] - \Big(\E[Z|D=1]'\delta^*_{2,t^*} + \E[X_{t^*-1}|D=1] \beta^*_{2,t^*}\Big)
\end{align*}
Thus, this particular case bypasses the need for actually estimating a separate model for the change in the time-varying covariate over time.  This is perhaps not surprising as these are the same conditions as in \Cref{sec:identification} where it was sufficient for the researcher to condition on the pre-treatment value of the covariates to recover the ATT.

\paragraph{Case 3: \Cref{ass:cov-unc} holds conditional on $\bm{X_{t^*-1},W_{t^*-1},Z}$} \ 

In this case,
\begin{align} \label{eqn:case3}
    \E[\Delta X_{t^*}(0) | D=1] &= \E\left[ \E[\Delta X_{t^*}(0) | Z, X_{t^*-1}, W_{t^*-1},D=1] | D=1\right] \\
    & = \E[Z|D=1]'\gamma_{t^*} + \E[X_{t^*-1}|D=1] \lambda_{t^*} + \E[W_{t^*-1}|D=1]'\xi_{t^*} \nonumber
\end{align}
where the first equality holds by the law of iterated expectations, and the second equality holds by \Cref{ass:cov-unc} and by assuming a linear model for the change in untreated covariates over time.  This suggests estimating $\E[\Delta X_{t^*}(0)|D=1]$ by running a regression of $\Delta X_{t^*}$ on $Z$, $X_{t^*-1}$, and $W_{t^*-1}$ using only untreated observations in order to estimate the parameters $\gamma_{t^*}$, $\lambda_{t^*}$, and $\xi_{t^*}$, and then to estimate $\E[\Delta X_{t^*}(0) | D=1]$ by using the sample analogue of the expression in \Cref{eqn:case3}.\footnote{In the special case (and perhaps leading case) considered in \Cref{cor:1} where $W_{t^*-1}$ includes the lagged outcome, $Y_{t^*-1}$ (in addition to all time-invariant covariates and the pre-treatment version of the covariates), one can follow this same procedure with $Y_{t^*-1}$ substituting for $W_{t^*-1}$.}



\paragraph{Case 4: Unconditional Parallel Trends holds for time-varying covariates} \ 

For this case, we assume that $\E[\Delta X_{t^*}(0) | D=1] = \E[\Delta X_{t^*}(0) | D=0]$.  It immediately follows that 
\begin{align*}
    ATT = \E[\Delta Y_t | D=1] - \Big( \E[Z|D=1]'\delta^*_{t^*} + \E[\Delta X_{t^*} | D=0] \beta_{t^*} + \E[X_{t^*-1}(0)|D=1]\beta^*_{t^*} \Big)
\end{align*}

This expression is very similar to the one in Case 1, except that one should use the change in untreated potential covariates \textit{for the untreated group}.

\paragraph{Case 5: Conditional Parallel Trends holds for time-varying covariates} \ 

For this case, we assume that $\E[\Delta X_{t^*}(0) | Z, D=1] = \E[\Delta X_{t^*}(0) | Z, D=0]$.  In this case,
\begin{align}
    \E[\Delta X_{t^*}(0) | D=1] &= \E\left[ \E[\Delta X_{t^*}(0) | Z, D=1] \big| D=1\right] \nonumber \\
    &= \E\left[ \E[\Delta X_{t^*} | Z, D=0] \big| D=1\right] \nonumber \\
    &= \E[Z|D=1]'\gamma_{t^*} \label{eqn:case5}
\end{align}
where the first equality holds by the law of iterated expectations, the second equality holds under conditional parallel trends for time-varying covariates, and the last equality holds under a linearity assumption.  This suggests estimating $\gamma_{t^*}$ by running a regression of $\Delta X_{t^*}$ on $Z$ using only untreated observations and then to estimate $\E[\Delta X_{t^*}(0)|D=1]$ from the sample analogue of \Cref{eqn:case5}.

\paragraph{Case 6: Conditional Parallel Trends Holds under Generic Parallel Trends Assumption} \ 

For this case, we assume that $\E[\Delta X_{t^*}(0) | Z, W_{t^*-1}, D=1] = \E[\Delta X_{t^*}(0) | Z, W_{t^*-1}, D=0]$.  In this case,
\begin{align}
    \E[\Delta X_{t^*}(0) | D=1] &= \E\left[ \E[\Delta X_{t^*}(0) | Z, W_{t^*-1}, D=1] | D=1\right] \nonumber \\
    &= \E[Z|D=1]'\gamma_{t^*} + \E[W_{t^*-1}|D=1]'\xi_{t^*} \label{eqn:case6}
\end{align}
where the first equality holds by the law of iterated expectations, and the second equality holds by the conditional parallel trends assumption used in this case and a linearity assumption.  Similarly to above, this suggests running a regression of $\Delta X_{t^*}$ on $Z$ and $W_{t^*-1}$ using only untreated observations to estimate $\gamma_{t^*}$ and $\xi_{t^*}$ and then to estimate $\E[\Delta X_{t^*}(0)|D=1]$ from the sample analogue of \Cref{eqn:case6}.

\bigskip

\bigskip

All of the approaches discussed in this section are substantially more robust than the TWFE regressions discussed in \Cref{sec:twfe}.  In particular, unlike TWFE regressions, they allow for the path of untreated potential outcomes to depend on (i) time-invariant covariates, (ii) the pre-treatment level of time-varying covariates, and (iii) the change in time-varying covariates over time that would have occurred if the treatment had not taken place.  Given any of a number of assumptions on the path of time-varying covariates in the absence of the treatment (as in Cases 1-6 above), they allow for the treatment to have an effect on time-varying covariates.  They allow for general forms of treatment effect heterogeneity; for example, they do not require conditional ATTs to be linear in covariates (as in \Cref{ass:bias-linearity}(a)) nor do they require any of the extra treatment effect homogeneity conditions for TWFE regressions in \Cref{prop:twfe-extra-assumptions}.  Finally, they do not require any linearity conditions for the propensity score as in \Cref{ass:bias-pscore}.  Relative to the approach discussed in \Cref{sec:identification}, the approaches considered in this section require linearity assumptions on the model for untreated potential outcomes and, in some cases, on a model for the change in untreated potential covariates over time.  The two main advantages of this approaches in this section are (i) parallel trends assumptions for time varying covariates can be strong enough to identify the ATT, and (ii) the approaches in this section are also easy to implement --- they only requiring running regressions and computing averages.  

\begin{remark}
  Even in the case where $\beta_t = \beta$ (i.e., that the effect of time-varying covariates does not change over time) the strategies proposed in this section would still result in improved estimators relative to the TWFE regressions considered in \Cref{sec:twfe} as they would still allow for parallel trends to depend on time invariant covariates, allow for general forms of treatment effect heterogeneity, and do not require assumptions on the propensity score.  In the special case where $\beta_t = \beta$ \textit{and} \Cref{ass:cov-exog} holds (so that covariates are not affected by the treatment), then the approaches proposed in this section coincide with the ``imputation estimators'' proposed in  \citet[see Assumption 1']{borusyak-jaravel-spiess-2021}, but, in general, our approach allows for the path of untreated potential outcomes to depend on both the level and change of time-varying covariates as well as for time-varying covariates to be affected by the treatment.
\end{remark}

\begin{remark} \label{rem:multi-periods}
  This section has continued to consider the case with exactly two periods, but it is straightforward to extend these arguments to multiple periods and variation in treatment timing by estimating models for untreated potential outcomes using all available untreated observations (these are observations both for units that do not participate in the treatment in any time period as well as pre-treatment time periods for units that become treated at some point).  Once the model for untreated potential outcomes has been estimated, one can ``impute'' untreated potential outcomes for treated observations, and weighted averages of differences between observed treated potential outcomes and imputed untreated potential outcomes correspond to various treatment effect parameters, depending on the weights chosen by the researcher.
\end{remark}

\begin{additional-material}
\section{Multiple Time Periods}

Many applications in economics have more than exactly two time periods available.  In this section, we extend previous results in two directions.  First, following \citet{callaway-santanna-2021}, we show how the previous arguments can be naturally extended to identifying group-time average treatment effects and then be aggregated into, for example, overall treatment effect parameters or event studies.

Second, I consider pre-testing the main identifying assumptions in the case where a researcher has access to more than one pre-treatment period.

* comment on common practice of just conditioning on covariates in the first period

* can you do joint pre-test??

* may be more efficient estimators using GMM, but we ignore this now

\begin{remark}
  Compare to \citet{imai-kim-wang-2018}, in our case weights can depend on $Y_{t-1}$.
\end{remark}

\section{Estimation}

\point{machine learning}

The main challenge with \Cref{eqn:att-identified} is a practical one --- $\E[\Delta Y_t|X_t,X_{t-1},D=0]$ is challenging to estimate.  For one reason, (although we abstract from this issue) in most realistic applications, the dimension of $X$ may be fairly large.  This means that nonparametric estimation may be challenging in practice.\footnote{This is a primary concern even in the case with time invariant covariates and is primary motivation for recent work on doubly robust estimators in the context of difference in differences (\citet{santanna-zhao-2020}).}  A second practical challenge is that, in many applications, $X_t$ and $X_{t-1}$ may be highly correlated with each other.  This suggests that approaches such as trying to estimate that conditional expectation using a linear model may perform poorly in practice.\footnote{That being said, one potentially promising approach here would be to use some kind of machine learning approach here.  \citet{chang-2020} studies difference in differences under a conditional parallel trends assumption using machine learning estimation strategies.  That paper does not explicitly consider time varying covariates, but the expression in \Cref{eqn:att-identified} fits into that framework simply by including both $X_t$ and $X_{t-1}$ as covariates.}  Indeed, trying to estimate this sort of expression either parametrically, nonparametrically, or using machine learning is not common in applied work.

\section{Application}

Good candidates for applications:

\begin{itemize}
    \item job displacement
    \item covid - \point{it would be really helpful for our arguments if we could find a paper that was interested in the effect of some Covid-related policy on some outcome (other than Covid cases) but wanted to control for the number of Covid cases in their regressions.}
    \item cheng and hoekstra's castle doctrine paper; they use state level data, and include a number of time varying covariates.  A quick looks suggests that these would be unlikely to be affected by the treatment itself.  However, if they only have state-level data, it may be practically challenging to implement our approach on their data.
\end{itemize}
\end{additional-material}


\section{Conclusion}

In the current paper, we have considered DID identification strategies where the parallel trends assumption holds only after conditioning on time varying covariates that may themselves be affected by the treatment.  This setting is common in empirical applications in economics, and we have provided several approaches that offer a number of advantages relative to more commonly used TWFE regressions that include covariates (even in the case where there are only two time periods).  In addition, the new approaches that we have proposed are generally not much more complicated to implement than TWFE regressions.

\onehalfspace

\printbibliography

@unpublished{gupta-montenovo-nguyen-rojas-schmutte-simon-weinburg-2020,
  title={Effects of social distancing policy on labor market outcomes},
  author={Gupta, Sumedha and Montenovo, Laura and Nguyen, Thuy Dieu and Lozano-Rojas, Felipe and Schmutte, Ian M and Simon, Kosali Ilayperuma and Weinberg, Bruce A and Wing, Coady},
  note={Working Paper},
  year={2020}
}

@article{chetty-friedman-hendren-stepner-2020,
  title={The economic impacts of COVID-19: Evidence from a new public database built from private sector data},
  author={Chetty, Raj and Friedman, J and Hendren, Nathaniel and Stepner, Michael},
  journal={Opportunity Insights},
  year={2020}
}

@unpublished{weill-stigler-deschenes-springborn-2021,
  title={Researchers' Degrees-of-Flexibility and the Credibility of Difference-in-Differences Estimates: Evidence From the Pandemic Policy Evaluations},
  author={Weill, Joakim A and Stigler, Matthieu and Deschenes, Olivier and Springborn, Michael R},
  year={2021},
  note={Working Paper}
}

@article{jacobson-lalonde-sullivan-1993,
  title={Earnings losses of displaced workers},
  author={Jacobson, Louis S and LaLonde, Robert J and Sullivan, Daniel G},
  journal={American Economic Review},
  pages={685--709},
  year={1993},
  publisher={JSTOR}
}

@article{stevens-1997,
  title={Persistent effects of job displacement: The importance of multiple job losses},
  author={Stevens, Ann Huff},
  journal={Journal of Labor Economics},
  pages={165--188},
  year={1997},
  publisher={JSTOR}
}

@article{topel-1991,
  title={Specific capital, mobility, and wages: Wages rise with job seniority},
  author={Topel, Robert},
  journal={Journal of Political Economy},
  volume={99},
  number={1},
  pages={145--176},
  year={1991},
  publisher={The University of Chicago Press}
}

@article{brand-2006,
  title={The effects of job displacement on job quality: Findings from the Wisconsin Longitudinal Study},
  author={Brand, Jennie E},
  journal={Research in Social Stratification and Mobility},
  volume={24},
  number={3},
  pages={275--298},
  year={2006},
  publisher={Elsevier}
}

@article{barnette-odongo-reynolds-2021,
  title={Changes over time in the cost of job loss for young men and women},
  author={Barnette, Justin and Odongo, Kennedy and Reynolds, C Lockwood},
  journal={The BE Journal of Economic Analysis \& Policy},
  volume={21},
  number={1},
  pages={335--378},
  year={2021},
  publisher={De Gruyter}
}

@article{cheng-hoekstra-2013,
  title={Does strengthening self-defense law deter crime or escalate violence? Evidence from expansions to castle doctrine},
  author={Cheng, Cheng and Hoekstra, Mark},
  journal={Journal of Human Resources},
  volume={48},
  number={3},
  pages={821--854},
  year={2013},
  publisher={University of Wisconsin Press}
}

@article{bonhomme-sauder-2011,
  title={Recovering distributions in difference-in-differences models: {A} comparison of selective and comprehensive schooling},
  author={Bonhomme, Stephane and Sauder, Ulrich},
  journal={Review of Economics and Statistics},
  volume={93},
  number={2},
  pages={479--494},
  year={2011},
  publisher={MIT Press}
}

@article{athey-imbens-2006,
  title={Identification and inference in nonlinear difference-in-differences models},
  author={Athey, Susan and Imbens, Guido},
  journal={Econometrica},
  volume={74},
  number={2},
  pages={431--497},
  year={2006},
  publisher={Wiley Online Library}
}

@article{abadie-2005,
  title={Semiparametric difference-in-differences estimators},
  author={Abadie, Alberto},
  journal={The Review of Economic Studies},
  volume={72},
  number={1},
  pages={1--19},
  year={2005},
  publisher={Oxford University Press}
}

@article{heckman-ichimura-todd-1997,
  title={Matching as an econometric evaluation estimator: {Evidence} from evaluating a job training programme},
  author={Heckman, James and Ichimura, Hidehiko and Todd, Petra},
  journal={The Review of Economic Studies},
  volume={64},
  number={4},
  pages={605--654},
  year={1997},
  publisher={Oxford University Press}
}

@article{heckman-ichimura-smith-todd-1998,
  title={Characterizing selection bias using experimental data},
  author={Heckman, James and Ichimura, Hidehiko and Smith, Jeffrey and Todd, Petra},
  journal={Econometrica},
  volume={66},
  number={5},
  pages={1017--1098},
  year={1998}
}

@article{imbens-wooldridge-2009,
  title={Recent developments in the econometrics of program evaluation},
  author={Imbens, Guido and Wooldridge, Jeffrey},
  journal={Journal of Economic Literature},
  volume={47},
  number={1},
  pages={5--86},
  year={2009}
}

@article{callaway-li-2019,
  title={Quantile treatment effects in difference in differences models with panel data},
  author={Callaway, Brantly and Li, Tong},
  journal={Quantitative Economics},
  volume={10},
  number={4},
  pages={1579--1618},
  year={2019},
  publisher={Wiley Online Library}
}

@unpublished{melly-santangelo-2015,
  title={The changes-in-changes model with covariates},
  author={Melly, Blaise and Santangelo, Giulia},
  year={2015},
  note={Working Paper}
}

@article{callaway-li-oka-2018,
  title={Quantile treatment effects in difference in differences models under dependence restrictions and with only two time periods},
  author={Callaway, Brantly and Li, Tong and Oka, Tatsushi},
  journal={Journal of Econometrics},
  year={2018},
  volume={206},
  number={2},
  pages={395--413},
  publisher={Elsevier}
}

@article{bierens-1982,
  title={Consistent model specification tests},
  author={Bierens, Herman J},
  journal={Journal of Econometrics},
  volume={20},
  number={1},
  pages={105--134},
  year={1982},
  publisher={Elsevier}
}

@article{stute-1997,
  title={Nonparametric model checks for regression},
  author={Stute, Winfried},
  journal={The Annals of Statistics},
  pages={613--641},
  year={1997},
  publisher={JSTOR}
}

@book{angrist-pischke-2008,
  title={Mostly Harmless Econometrics: An Empiricist's Companion},
  author={Angrist, Joshua D and Pischke, Jorn-Steffen},
  year={2008},
  publisher={Princeton University Press}
}

@article{lechner-2011,
  title={The estimation of causal effects by difference-in-difference methods},
  author={Lechner, Michael},
  journal={Foundations and Trends in Econometrics},
  volume={4},
  number={3},
  pages={165--224},
  year={2011},
  publisher={Now Publishers, Inc.}
}

@article{blundell-dias-2009,
  title={Alternative approaches to evaluation in empirical microeconomics},
  author={Blundell, Richard and Costa Dias, Monica},
  journal={Journal of Human Resources},
  volume=44,
  number=3,
  pages={565--640},
  year=2009,
  publisher={University of Wisconsin Press}
}

@article{callaway-santanna-2021,
  title={Difference-in-differences with multiple time periods},
  author={Callaway, Brantly and Sant'Anna, Pedro HC},
  year={2021},
  journal={Journal of Econometrics},
  volume={225},
  number={2},
  pages={200--230}
}

@article{goodman-2021,
  title={Difference-in-differences with variation in treatment timing},
  author={Goodman-Bacon, Andrew},
  year={2021},
  journal={Journal of Econometrics},
  volume={225},
  number={2},
  pages={254--277}
}

@article{chaisemartin-dhaultfoeuille-2020,
  title={Two-way fixed effects estimators with heterogeneous treatment effects},
  author={{de Chaisemartin}, Clement and D'Haultf{\oe}uille, Xavier},
  year={2020},
  journal={American Economic Review},
  volume={110},
  number={9},
  pages={2964--2996}
}

@unpublished{borusyak-jaravel-spiess-2021,
  title={Revisiting event study designs: Robust and efficient estimation},
  author={Borusyak, Kirill and Jaravel, Xavier and Spiess, Jann},
  note={Working Paper},
  year={2021}
}

@unpublished{imai-kim-wang-2018,
  title={Matching Methods for Causal Inference with Time-Series Cross-Section Data},
  author={Imai, Kosuke and Kim, In Song and Wang, Erik},
  year={2018},
  note={Working Paper}
}

@article{robins-hernan-brumback-2000,
  title={Marginal structural models and causal inference in Epidemiology},
  author={Robins, James M and Hern{\'a}n, Miguel Angel and Brumback, Babette},
  journal={Epidemiology},
  volume={11},
  number={5},
  pages={551},
  year={2000}
}

@article{blackwell-glynn-2018,
  title={How to make causal inferences with time-series cross-sectional data under selection on observables},
  author={Blackwell, Matthew and Glynn, Adam N},
  journal={American Political Science Review},
  volume={112},
  number={4},
  pages={1067--1082},
  year={2018},
  publisher={Cambridge University Press}
}

@incollection{robins-1997,
  title={Causal inference from complex longitudinal data},
  author={Robins, James M},
  booktitle={Latent variable modeling and applications to causality},
  pages={69--117},
  year={1997},
  publisher={Springer}
}

@article{lechner-2008,
  title={A note on endogenous control variables in causal studies},
  author={Lechner, Michael},
  journal={Statistics \& Probability Letters},
  volume={78},
  number={2},
  pages={190--195},
  year={2008},
  publisher={Elsevier}
}

@article{santanna-zhao-2020,
  title={Doubly robust difference-in-differences estimators},
  author={Sant’Anna, Pedro HC and Zhao, Jun},
  journal={Journal of Econometrics},
  volume={219},
  number={1},
  pages={101--122},
  year={2020},
  publisher={Elsevier}
}

@article{robins-rotnitzky-zhao-1994,
  title={Estimation of regression coefficients when some regressors are not always observed},
  author={Robins, James M and Rotnitzky, Andrea and Zhao, Lue Ping},
  journal={Journal of the American statistical Association},
  volume={89},
  number={427},
  pages={846--866},
  year={1994},
  publisher={Taylor \& Francis}
}

@article{scharfstein-rotnitzky-robins-1999,
  title={Adjusting for nonignorable drop-out using semiparametric nonresponse models},
  author={Scharfstein, Daniel O and Rotnitzky, Andrea and Robins, James M},
  journal={Journal of the American Statistical Association},
  volume={94},
  number={448},
  pages={1096--1120},
  year={1999},
  publisher={Taylor \& Francis Group}
}

@article{sloczynski-wooldridge-2018,
  title={A general double robustness result for estimating average treatment effects},
  author={S{\l}oczy{\'n}ski, Tymon and Wooldridge, Jeffrey M},
  journal={Econometric Theory},
  volume={34},
  number={1},
  pages={112--133},
  year={2018},
  publisher={Cambridge University Press}
}

@incollection{arellano-honore-2001,
  title={Panel data models: some recent developments},
  author={Arellano, Manuel and Honor{\'e}, Bo},
  booktitle={Handbook of Econometrics},
  volume={5},
  pages={3229--3296},
  year={2001},
  publisher={Elsevier}
}

@unpublished{callaway-li-2021b,
  title={Policy evaluation during a pandemic},
  author={Callaway, Brantly and Li, Tong},
  year={2021},
  note={Working Paper}
}

@unpublished{liu-wang-xu-2021,
  title={A practical guide to counterfactual estimators for causal inference with time-series cross-sectional data},
  author={Liu, Licheng and Wang, Ye and Xu, Yiqing},
  note={Working Paper},
  year={2021}
}

@unpublished{gardner-2021,
  title={Two-stage difference in differences},
  author={Gardner, John},
  note={Working Paper},
  year={2021}
}

@unpublished{chabe-2017,
  title={Should we combine difference in differences with conditioning on pre-treatment outcomes?},
  author={Chab{\'e}-Ferret, Sylvain},
  year={2017},
  note={Working paper}
}

@article{chernozhukov-etal-2018,
  title={Double/debiased machine learning for treatment and structural parameters},
  author={Chernozhukov, Victor and Chetverikov, Denis and Demirer, Mert and Duflo, Esther and Hansen, Christian and Newey, Whitney and Robins, James},
  journal={The Econometrics Journal},
  volume={21},
  number={1},
  pages={C1--C68},
  year={2018},
  publisher={Wiley Online Library}
}

@article{chang-2020,
  title={Double/debiased machine learning for difference-in-differences models},
  author={Chang, Neng-Chieh},
  journal={The Econometrics Journal},
  volume={23},
  number={2},
  pages={177--191},
  year={2020},
  publisher={Oxford University Press}
}

@article{zeldow-hatfield-2021,
  title={Confounding and regression adjustment in difference-in-differences studies},
  author={Zeldow, Bret and Hatfield, Laura A},
  journal={Health Services Research},
  year={2021},
  publisher={Wiley Online Library}
}

@article{rosenbaum-1984,
  title={The consequences of adjustment for a concomitant variable that has been affected by the treatment},
  author={Rosenbaum, Paul R},
  journal={Journal of the Royal Statistical Society: Series A (General)},
  volume={147},
  number={5},
  pages={656--666},
  year={1984},
  publisher={Wiley Online Library}
}

@article{flores-lagunes-2009,
  title={Identification and estimation of causal mechanisms and net effects of a treatment under unconfoundedness},
  author={Flores, Carlos A and Flores-Lagunes, Alfonso},
  year={2009},
  publisher={IZA Discussion paper}
}

@inproceedings{huber-2020,
  author={Huber, Martin},
  editor={Zimmermann, Klaus F.},
  title={Mediation Analysis},
  booktitle={Handbook of Labor, Human Resources and Population Economics},
  pages={1--38},
  year={2020},
  publisher={Springer International Publishing}
}

@article{angrist-1998,
  title={Estimating the Labor Market Impact of Voluntary Military Service Using Social Security Data on Military Applicants},
  author={Angrist, Joshua D},
  journal={Econometrica},
  volume={66},
  number={2},
  pages={249--288},
  year={1998}
}

@article{aronow-samii-2016,
  title={Does regression produce representative estimates of causal effects?},
  author={Aronow, Peter M and Samii, Cyrus},
  journal={American Journal of Political Science},
  volume={60},
  number={1},
  pages={250--267},
  year={2016},
  publisher={Wiley Online Library}
}

@unpublished{goldsmith-hull-kolesar-2021,
  title={On estimating multiple treatment effects with regression},
  author={Goldsmith-Pinkham, Paul and Hull, Peter and Koles{\'a}r, Michal},
  year={2021},
  note={Working Paper}
}

@unpublished{ishimaru-2021,
  title={Empirical Decomposition of the IV-OLS Gap with Heterogeneous and Nonlinear Effects},
  author={Ishimaru, Shoya},
  year={2021},
  note={Working Paper}
}

@unpublished{ishimaru-2022,
  title={What Do We Get From A Two-Way Fixed Effects Estimator? Implications From A General Numerical Equivalence},
  author={Ishimaru, Shoya},
  year={2022},
  note={Working Paper}
}

@article{sloczynski-2020,
  title={Interpreting OLS estimands when treatment effects are heterogeneous: Smaller groups get larger weights},
  author={S{\l}oczy{\'n}ski, Tymon},
  journal={The Review of Economics and Statistics},
  pages={1--27},
  year={2020}
}

\appendix\renewcommand{\theequation}{A\arabic{equation}}\setcounter{equation}{0}

\section{Proofs} \label{app:proofs}

\subsection*{Proof of \Cref{thm:1}}

\begin{proof}
  To start with, notice that 
  \begin{align*}
      ATT &= \E[Y_{t^*}(1) - Y_{t^*}(0) | D=1] \\
      &= \E[Y_{t^*}(1) - Y_{t^*-1}(0) | D=1] + \E[Y_{t^*}(0) - Y_{t^*-1}(0) | D=1] \\
      &= \E[\Delta Y_{t^*} | D=1] - \E[\Delta Y_{t^*}(0) | D=1]
  \end{align*}
  where the first equality is just the definition of $ATT$, the second equality holds by adding and subtracting $\E[Y_{t^*-1}(0) | D=1]$, and the third equality holds by writing potential outcomes in terms of their observed counterparts.  For part (1), further notice that,
  \begin{align*}
      \E[\Delta Y_{t^*}(0) | D=1] &= \E\Big[ \E[\Delta Y_{t^*}(0) | X_{t^*}(0), X_{t^*-1}, Z, D=1] \Big| D=1 \Big] \\
      &= \E\Big[ \E[\Delta Y_{t^*}(0) | X_{t^*}(0), X_{t^*-1}, Z, D=0] \Big| D=1 \Big] \\
      &= \E\Big[ \E[\Delta Y_{t^*} | X_{t^*}, X_{t^*-1}, Z, D=0] \Big| D=1 \Big]
  \end{align*}
  where the first equality holds by the law of iterated expectations, the second equality holds by \Cref{ass:conditional-parallel-trends}, and the last equality holds because $\Delta Y_{t^*}(0)$ and $X_{t^*}(0)$ are observed for the untreated group and uses \Cref{ass:cov-exog} to integrate over the distribution of observed covariates (i.e., treated potential covariates) for the treated group.  Combining this expression with the previous one for $ATT$ completes the proof for part (1) of the result.
  
  For part (2), notice that
  \begin{align*}
      \E[\Delta Y_{t^*}(0) | D=1] &= \E\Big[ \E[\Delta Y_{t^*}(0) | X_{t^*}(0), X_{t^*-1}, Z, D=1] \Big| D=1 \Big] \\
      &= \E\Big[ \E[\Delta Y_{t^*}(0) | X_{t^*}(0), X_{t^*-1}, Z, D=0] \Big| D=1 \Big] \\
      &= \E\Big[ \E\big[\E[\Delta Y_{t^*}(0) | X_{t^*}(0), X_{t^*-1}, Z, D=0] \big| X_{t^*-1},W_{t^*-1},Z, D=1 \big] \Big| D=1 \Big] \\
      &= \E\Big[ \E\big[\E[\Delta Y_{t^*}(0) | X_{t^*}(0), X_{t^*-1}, Z, D=0] \big| X_{t^*-1},W_{t^*-1},Z, D=0 \big] \Big| D=1 \Big] \\
      &= \E\Big[ \E\big[\E[\Delta Y_{t^*} | X_{t^*}, X_{t^*-1}, Z, D=0] \big| X_{t^*-1},W_{t^*-1}, Z, D=0 \big] \Big| D=1 \Big]
  \end{align*}
  where the first equality holds by the law of iterated expectations, the second equality holds by \Cref{ass:conditional-parallel-trends} (unlike part (1), this term is not immediately identified because we do not have an immediate analogue of the distribution of $X_{t^*}(0)$ in order to identify the outer expectation), the third equality holds by the law of iterated expectations, the fourth equality holds by \Cref{ass:cov-unc} (because, after conditioning on $(X_{t^*-1},W_{t^*-1},Z)$, the only randomness comes from $X_{t^*}(0)$), the fifth equality holds by writing potential outcomes in terms of their observed counterparts, and this term is identified because the distribution of $(X_{t^*-1},W_{t^*-1},Z)$ is identified for the treated group.
\end{proof}

\subsection*{Proof of \Cref{cor:1}}

\begin{proof}
For part (1), the result holds immediately by the law of iterated expectations.  For part (2), the result holds immediately from the expression in part (2) of \Cref{thm:1} using $W_{t^*-1}=Y_{t^*-1}$.
\end{proof}

\subsection*{Proof of \Cref{cor:2}}

\begin{proof}
  For part (1), we omit the proof as, after invoking \Cref{ass:cov-exog}, this becomes the same case as with time invariant covariates --- see, for example, \citet{santanna-zhao-2020} for this sort of result in the case with time invariant covariates.  Given the expression for the ATT in part (1) of \Cref{cor:1}, the proof of part (2) follows using the same arguments as for part (1).
\end{proof}

\subsection*{Proof of \Cref{prop:twfe}}
We prove the result in several steps.

To start with, consider the numerator in the expression for $\alpha$ in \Cref{eqn:twfe}.  Notice that
\begin{align}
    \E[De] &= \E[D(\Delta Y_{t^*} - \L(\Delta Y_{t^*} | \Delta X_{t^*})] \nonumber \\
    &= \E[D(\Delta Y_{t^*} - \E[\Delta Y_{t^*} | X^{all}(0)] )] + \E[D(\E[\Delta Y_{t^*}|X^{all}(0)] - \L(\Delta Y_{t^*} | \Delta X_{t^*})] \label{eqn:twfe2}
\end{align}
We provide results for each of the terms in \Cref{eqn:twfe2} next.
\begin{lemma} \label{lem:twfe-num1} Under Assumptions \ref{ass:sampling}, \ref{ass:conditional-parallel-trends}, and \ref{ass:overlap}(a), 
  \begin{align*}
      \E[D(\Delta Y_{t^*} - \E[\Delta Y_{t^*} | X^{all}(0)])] =  \E\left[ \E[D] (1- p(X^{all}(0))) ATT_{X^{all}(0)}(X^{all}(0)) \Big| D=1 \right] 
  \end{align*}
\end{lemma}
\begin{proof}
  \begin{align*}
     & \E[D(\Delta Y_{t^*} - \E[\Delta Y_{t^*} | X^{all}(0)])] \\
     & \hspace{10pt} = \E\left[D\left(\Delta Y_{t^*} - (\E[\Delta Y_{t^*} | X^{all}(0), D=1] p(X^{all}(0)) - \E[\Delta Y_{t^*} | X^{all}(0), D=0] (1-p(X^{all}(0))) \right) \right] \\
     & \hspace{10pt} = \E\left[ \E[D \Delta Y_{t^*} | X^{all}(0)] \right. \\
     & \hspace{50pt} \left. - p(X^{all}(0)) \left(\E[\Delta Y_{t^*} | X^{all}(0), D=1] p(X^{all}(0)) - \E[\Delta Y_{t^*} | X^{all}(0), D=0] (1-p(X^{all}(0))\right) \right] \\
     & \hspace{10pt} = \E\left[ \E[\Delta Y_{t^*} | X^{all}(0), D=1] p(X^{all}(0)) \right. \\
     & \hspace{50pt} \left. - p(X^{all}(0)) \left(\E[\Delta Y_{t^*} | X^{all}(0), D=1] p(X^{all}(0)) - \E[\Delta Y_{t^*} | X^{all}(0), D=0] (1-p(X^{all}(0))\right) \right] \\
     & \hspace{10pt} = \E\left[ p(X^{all}(0)) (1- p(X^{all}(0))) \left( \E[\Delta Y_{t^*} | X^{all}(0), D=1] - \E[\Delta Y_{t^*} | X^{all}(0), D=0] \right) \right] \\
     & \hspace{10pt} = \E\left[ \E[D] (1- p(X^{all}(0))) \left( \E[\Delta Y_{t^*} | X^{all}(0), D=1] - \E[\Delta Y_{t^*} | X^{all}(0), D=0] \right) \Big| D=1 \right] \\
     & \hspace{10pt} = \E\left[ \E[D] (1- p(X^{all}(0))) ATT_{X^{all}(0)}(X^{all}(0)) \Big| D=1 \right] 
  \end{align*}
  where the first three equalities hold by repeatedly applying the law of iterated expectations, the fourth equality holds by rearranging terms, the fifth equality holds by integrating over the distribution of $X^{all}(0)$ conditional on $D=1$ and re-weighting, and the last equality holds under the conditional parallel trends assumption in \Cref{ass:conditional-parallel-trends}.
\end{proof}

Next, we provide a result for the second term in \Cref{eqn:twfe2}.  

\begin{lemma} \label{lem:twfe-num2} Under Assumptions \ref{ass:sampling}, \ref{ass:conditional-parallel-trends}, and \ref{ass:overlap}(a),
\begin{align*}
  & \E[D(\E[\Delta Y_{t^*}|X^{all}(0)] - \L(\Delta Y_{t^*} | \Delta X_{t^*})] \\
  & \hspace{25pt} = \E\left[ \E[D] \left\{\Big( \E[\Delta Y_{t^*} | X^{all}(0), D=1] - \L(\Delta Y_{t^*} | \Delta X_{t^*}, D=1) \Big) p(X^{all}(0)) \right. \right.\\
    & \hspace{75pt} \left.\left.-  \Big(\E[\Delta Y_{t^*} | X^{all}(0), D=0] - \L(\Delta Y_{t^*} | \Delta X_{t^*}, D=0) \Big) (1-p(X^{all}(0))) \right\} \Big| D=1 \right] \\
    & \hspace{25pt} + \E\left[ \E[D] \left\{ \L(\Delta Y_{t^*} | \Delta X_{t^*}, D=1) - \L(\Delta Y_{t^*} | \Delta X_{t^*}, D=0)\right\}(p(X^{all}(0)) - \L(D|\Delta X_{t^*})) \Big| D=1 \right]
\end{align*}
\end{lemma}
\begin{proof}
Notice that
 \begin{align*}
  & \E[D(\E[\Delta Y_{t^*}|X^{all}(0)] - \L(\Delta Y_{t^*} | \Delta X_{t^*})] \\
    & \hspace{25pt} = \E[ p(X^{all}(0)) (\E[\Delta Y_{t^*}|X^{all}(0)] - \L(\Delta Y_{t^*} | \Delta X_{t^*}) ] \\
    & \hspace{25pt} = \E[ p(X^{all}(0)) \{ \E[\Delta Y_{t^*} | X^{all}(0), D=1] p(X^{all}(0)) + \E[\Delta Y_{t^*} |  X^{all}(0), D=0](1-p(X^{all}(0))] \}] \\
    & \hspace{50pt} - \E[ p(X^{all}(0)) \{ \L(\Delta Y_{t^*} | \Delta X_{t^*},D=1)\L(D|\Delta X_{t^*}) + \L(\Delta Y_{t^*} | \Delta X_{t^*},D=0) (1-\L(D|\Delta X_{t^*}) \}] \\
    & \hspace{25pt} = \E\left[p(X^{all}(0)) \left\{\Big( \E[\Delta Y_{t^*} | X^{all}(0), D=1] - \L(\Delta Y_{t^*} | \Delta X_{t^*}, D=1) \Big) p(X^{all}(0)) \right. \right.\\
    & \hspace{100pt} \left.\left.+  \Big(\E[\Delta Y_{t^*} | X^{all}(0), D=0] - \L(\Delta Y_{t^*} | \Delta X_{t^*}, D=0) \Big) (1-p(X^{all}(0))) \right\} \right] \\
    & \hspace{25pt} + \E\left[p(X^{all}(0))\left\{ \L(\Delta Y_{t^*} | \Delta X_{t^*}, D=1) - \L(\Delta Y_{t^*} | \Delta X_{t^*}, D=0)\right\}(p(X^{all}(0)) - \L(D|\Delta X_{t^*}))\right] \\
    & \hspace{25pt} = \E\left[ \E[D] \left\{\Big( \E[\Delta Y_{t^*} | X^{all}(0), D=1] - \L(\Delta Y_{t^*} | \Delta X_{t^*}, D=1) \Big) p(X^{all}(0)) \right. \right.\\
    & \hspace{100pt} \left.\left.+  \Big(\E[\Delta Y_{t^*} | X^{all}(0), D=0] - \L(\Delta Y_{t^*} | \Delta X_{t^*}, D=0) \Big) (1-p(X^{all}(0))) \right\} \Big| D=1 \right] \\
    & \hspace{25pt} + \E\left[ \E[D] \left\{ \L(\Delta Y_{t^*} | \Delta X_{t^*}, D=1) - \L(\Delta Y_{t^*} | \Delta X_{t^*}, D=0)\right\}(p(X^{all}(0)) - \L(D|\Delta X_{t^*})) \Big| D=1 \right] 
\end{align*}
where the first equality holds by applying the law of iterated expectations, the second equality holds by applying the law of iterated expectations and the law of iterated projections, the third equality holds by adding and subtracting $\E[\L(\Delta Y_{t^*} | \Delta X_{t^*}, D=1)p(X^{all}(0))^2]$ and $\E[\L(\Delta Y_{t^*} | \Delta X_{t^*}, D=0)p(X^{all}(0))(1-p(X^{all}(0)))]$ and rearranging terms, the fourth equality holds by applying the law of iterated expectations to each each term.  This completes the proof.
\end{proof}

Next, we provide a result on decomposing differences between the conditional expectation of $\Delta Y_{t^*}$ (conditional on the full vector $X^{all}(0)$) and the linear projection of $\Delta Y_{t^*}$ on $\Delta X_{t^*}$.

\begin{lemma} \label{lem:twfe-el-diff} Under Assumptions \ref{ass:sampling}, \ref{ass:conditional-parallel-trends}, and \ref{ass:overlap}(a) and for $d \in \{0,1\}$,  
\begin{align}
    & \E[\Delta Y_{t^*} | X^{all}(0), D=d] - \L(\Delta Y_{t^*} | \Delta X_{t^*}, D=d) \nonumber \\
    & \hspace{50pt} = \E[\Delta Y_{t^*} | X_t(0), X_{t-1}, Z, D=d] - \E[\Delta Y_{t^*} | X_t, X_{t-1}, Z, D=d] \tag{A}\\
    & \hspace{75pt} + \E[\Delta Y_{t^*} | X_t, X_{t-1}, Z, D=d] - \E[\Delta Y_{t^*} | X_t, X_{t-1}, D=d] \tag{B} \\
    & \hspace{75pt} + \E[\Delta Y_{t^*} | X_t, X_{t-1}, D=d] - \E[\Delta Y_{t^*} | \Delta X_{t^*}, D=d] \tag{C} \\
    & \hspace{75pt} + \E[\Delta Y_{t^*} | \Delta X_{t^*}, D=d] - \L(\Delta Y_{t^*} | \Delta X_{t^*}, D=d) \tag{D}
\end{align}
\end{lemma}

\point{it is not 100\% clear to me what the best order to add and subtract terms is.  Particularly, we could go from $X_{t^*}(0)$ to $X_{t^*}$ towards the end rather than the beginning.  This might somewhat alter the assumptions that we need to make in a favorable way (e.g., we might not require \Cref{ass:cov-exog} for \textit{every} condition to hold.  The drawback would be that more of these expressions would contain terms that are not necessarily identified.}

\begin{proof}
  The result holds immediately just by adding and subtracting terms.
\end{proof}

Next, we provide a useful result for the denominator in the expression for $\alpha$ in \Cref{eqn:twfe}.

\begin{lemma} \label{lem:twfe-denom} Under Assumptions \ref{ass:sampling}, \ref{ass:conditional-parallel-trends}, and \ref{ass:overlap}(a), 
  \begin{align*}
    \E[u^2] &= \E[ p(\Delta X_{t^*}) (1-\L(D|\Delta X_{t^*})) ] \\
    &= \E\big[ \E[D] (1-\L(D|\Delta X_{t^*})) | D=1 \big]
  \end{align*}
\end{lemma}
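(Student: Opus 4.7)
The plan is to prove the two equalities by expanding the square and repeatedly using (i) $D^2 = D$ since $D$ is binary, (ii) the orthogonality property of the linear projection $\L(D|\Delta X_{t^*})$, namely that $D - \L(D|\Delta X_{t^*})$ is uncorrelated with any linear function of $\Delta X_{t^*}$, and (iii) the law of iterated expectations. No appeal to \Cref{ass:conditional-parallel-trends} or overlap is actually needed beyond ensuring the relevant expectations exist and the linear projection is well-defined.

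For the first equality, I would expand
\begin{align*}
    \E\Big[(D - \L(D|\Delta X_{t^*}))^2\Big] = \E[D^2] - 2\E[D \, \L(D|\Delta X_{t^*})] + \E[\L(D|\Delta X_{t^*})^2].
\end{align*}
Since $D \in \{0,1\}$, $D^2 = D$. The orthogonality condition from the linear projection gives $\E[\L(D|\Delta X_{t^*})(D-\L(D|\Delta X_{t^*}))]=0$, hence $\E[\L(D|\Delta X_{t^*})^2] = \E[D \, \L(D|\Delta X_{t^*})]$. Substituting back collapses the expression to $\E[D(1-\L(D|\Delta X_{t^*}))]$. Then one conditions on $\Delta X_{t^*}$ via the law of iterated expectations; since $\L(D|\Delta X_{t^*})$ is measurable with respect to $\Delta X_{t^*}$, it comes out of the inner conditional expectation, and $\E[D | \Delta X_{t^*}] = p(\Delta X_{t^*})$ (using the natural notation for the conditional probability), yielding the first equality.

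For the second equality, I would start from the intermediate expression $\E[D(1-\L(D|\Delta X_{t^*}))]$ obtained above and note that for any random variable $W$, $\E[D \cdot W] = \E[W | D=1]\,\P(D=1)$. Applying this with $W = 1-\L(D|\Delta X_{t^*})$ gives $\E[(1-\L(D|\Delta X_{t^*})) | D=1]\,p$, which is the desired right-hand side.

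The entire argument is routine linear-projection algebra; the only point requiring care is to be precise about the orthogonality condition (namely, that the residual is orthogonal to any linear function of $\Delta X_{t^*}$, including the projection itself and the constant, so the implicit intercept term embedded in $\Delta X_{t^*}$ poses no issue). Since both equalities follow without invoking \Cref{ass:conditional-parallel-trends}, the lemma is essentially a mechanical identity, useful later for rewriting the denominator of $\alpha$ in the Frisch--Waugh--Lovell formula as an object averaged over the treated subpopulation.
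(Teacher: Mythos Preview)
Your proposal is correct and follows essentially the same route as the paper: expand the square, show that $\E[\L(D|\Delta X_{t^*})^2]=\E[D\,\L(D|\Delta X_{t^*})]$ (you via the projection orthogonality condition, the paper via writing out the projection matrices explicitly and cancelling), collapse to $\E[D(1-\L(D|\Delta X_{t^*}))]$, and then apply the law of iterated expectations in two ways. Your invocation of orthogonality is a slightly cleaner way to state what the paper verifies by hand, but the argument is the same.
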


\begin{proof}
  From the definition of $u$, it follows that
  \begin{align}
    \E\left[u^2\right] &= \E\left[ (D - \L(D|\Delta X_{t^*}))^2\right] \nonumber \\
                       &= \E[D] - 2 \E[D \L(D|\Delta X_{t^*})] + \E[\L(D|\Delta X_{t^*})^2] \nonumber \\
                       &:= A_1 - 2A_2 + A_3 \label{eqn:a2a}
  \end{align}
  and we consider each of these in turn. Start with,
  \begin{align}
    A_2 &= \E[ D \L(D|\Delta X_{t^*}) ] \nonumber \\
        &= \E[ D \Delta X_{t^*}' \E[\Delta X_{t^*}\Delta X_{t^*}']^{-1} \E[\Delta X_{t^*}D] \nonumber \\
        &= \E[\Delta X_{t^*} D]'\E[\Delta X_{t^*}\Delta X_{t^*}']^{-1} \E[\Delta X_{t^*}D] \label{eqn:a2}
  \end{align}
  where the first equality holds from the definition of $\L(D|\Delta X_{t^*})$ and the second equality holds immediately from the previous one.  Next,
  \begin{align}
    A_3 &= \E[\L(D|\Delta X_{t^*})^2] \nonumber \\
        &= \E\left[ \E[\Delta X_{t^*} D]' \E[\Delta X_{t^*}\Delta X_{t^*}']^{-1} \Delta X_{t^*} \Delta X_{t^*}' \E[\Delta X_{t^*}\Delta X_{t^*}']^{-1} \E[\Delta X_{t^*} D] \right] \nonumber\\
    &= \E[\Delta X_{t^*} D]'\E[\Delta X_{t^*}\Delta X_{t^*}']^{-1} \E[\Delta X_{t^*}D] \label{eqn:a3}
  \end{align}
  where the first equality holds by the definition of $A_3$, the second equality holds by the definition of $\L(D|\Delta X_{t^*})$, and the last equality holds by canceling terms.  Plugging \Cref{eqn:a2,eqn:a3} in \Cref{eqn:a2a} implies that
  \begin{align*}
    \E[u^2] &= \E[ D(1-\L(D|\Delta X_{t^*})) ] \\
    &= \E[ p(\Delta X_{t^*}) (1-\L(D|\Delta X_{t^*})) ] \\
    &= \E[\E[D](1-\L(D|\Delta X_{t^*})) | D=1]
  \end{align*}
  where the second and third equalities hold by the law of iterated expectations and which completes the proof.
\end{proof}

\begin{proof}[Proof of \Cref{prop:twfe}]
    The first part of the expression for $\alpha$ comes from \Cref{eqn:twfe2} and by \Cref{lem:twfe-num1} and \Cref{lem:twfe-denom}.  The second and third parts come from \Cref{eqn:twfe2} and by \Cref{lem:twfe-num2,lem:twfe-el-diff,lem:twfe-denom}.
\end{proof}

\subsection*{Proof of \Cref{prop:twfe-extra-assumptions}}

To show the first part of the result, we show that each of Terms (A)-(D) in \Cref{prop:twfe} are equal to zero under the extra conditions in this proposition.

\paragraph{Term (A):} First, consider the expression in Term (A) for $d=0$.  Notice that,
\begin{align*}
    \E[\Delta Y_{t^*} | X_{t^*}(0), X_{t^*-1}, Z, D=0] &= \E[\Delta Y_{t^*} | X_{t^*}, X_{t^*-1}, Z, D=0]
\end{align*}
which holds because untreated potential covariates are equal to observed covariates for the untreated group.  Second, consider the case when $d=1$.  In this case,
\begin{align*}
    \E[\Delta Y_{t^*} | X_{t^*}(0), X_{t^*-1}, Z, D=1] &= \E[\Delta Y_{t^*} | X_{t^*}, X_{t^*-1}, Z, D=1]
\end{align*}
holds immediately by \Cref{ass:cov-exog}.  Thus, Term (A) is equal to zero under \Cref{ass:cov-exog}.

\paragraph{Term (B):} First, consider the expression in Term (B) for $d=0$.  Notice that,
\begin{align*}
    \E[\Delta Y_{t^*} | X_{t^*}, X_{t^*-1}, Z, D=0] &= \E[\Delta Y_{t^*} | X_{t^*}, X_{t^*-1}, D=0]
\end{align*}
under \Cref{ass:bias-z}(b) because, conditional on being in the treated group, the observed change in outcomes is equal to the change in untreated potential outcomes and the observed $X_{t^*}$ is equal to $X_{t^*}(0)$.  Second, consider the case when $d=1$, in this case
\begin{align*}
    \E[\Delta Y_{t^*} | X_{t^*}, X_{t^*-1}, Z, D=1] &= \Big(\E[\Delta Y_{t^*} | X_{t^*}, X_{t^*-1}, Z, D=1] - \E[\Delta Y_{t^*}(0) | X_{t^*}(0), X_{t^*-1}, Z, D=1]\Big) \\
    &\hspace{25pt} + \Big(\E[\Delta Y_{t^*}(0) | X_{t^*}(0), X_{t^*-1}, Z, D=1] - \E[\Delta Y_{t^*}(0) | X_{t^*}(0), X_{t^*-1}, Z, D=0]\Big) \\
    & \hspace{25pt} + \E[\Delta Y_{t^*}(0) | X_{t^*}(0), X_{t^*-1}, Z, D=0] \\
    &= ATT_{X^{all}(0)}(X^{all}(0)) + 0 + \E[\Delta Y_{t^*}(0) | X_{t^*}(0), X_{t^*-1}, D=0] \\
    &= \E[Y_{t^*}(1) - Y_{t^*}(0) | X_{t^*}(0), X_{t^*-1}, D=1] + \E[\Delta Y_{t^*}(0) | X_{t^*}(0), X_{t^*-1}, D=0] \\
    &= \Big(\E[\Delta Y_{t^*} | X_{t^*}, X_{t^*-1}, D=1] - \E[\Delta Y_{t^*}(0) | X_{t^*}(0), X_{t^*-1}, D=1]\Big) \\
    & \hspace{25pt} + \E[\Delta Y_{t^*}(0) | X_{t^*}(0), X_{t^*-1}, D=0] \\
    &= \E[\Delta Y_{t^*} | X_{t^*}, X_{t^*-1}, D=1]
\end{align*}
where the first equality holds by adding and subtracting terms, the second equality holds by \Cref{ass:cov-exog} and the definition of $ATT_{X^{all}(0)}$, by \Cref{ass:conditional-parallel-trends}, and by \Cref{ass:bias-z}(b), the third equality holds by \Cref{ass:bias-z}(a), the fourth equality holds by adding and subtracting terms and by \Cref{ass:cov-exog}, and the last equality holds because
\begin{align*}
    \E[\Delta Y_{t^*}(0) | X_{t^*}(0), X_{t^*-1},D=1] &= \E[ \E[\Delta Y_{t^*}(0) | X_{t^*}(0), X_{t^*-1}, Z, D=1] | X_{t^*}(0), X_{t^*-1},D=1] \\
    &= \E[ \E[\Delta Y_{t^*}(0) | X_{t^*}(0), X_{t^*-1}, Z, D=0] | X_{t^*}(0), X_{t^*-1},D=1] \\
    &= \E[ \E[\Delta Y_{t^*}(0) | X_{t^*}(0), X_{t^*-1}, D=0] | X_{t^*}(0), X_{t^*-1},D=1] \\
    &= \E[\Delta Y_{t^*}(0) | X_{t^*}(0), X_{t^*-1}, D=0]
\end{align*}
where the first equality holds by the law of iterated expectations, the second equality holds by \Cref{ass:conditional-parallel-trends}, the third equality holds by \Cref{ass:bias-z}(b), and the last equality holds because, conditional on $X_{t^*}(0)$, and $X_{t^*-1}$, the inside conditional expectation is non-random.  Thus, Term (B) is equal to zero under \Cref{ass:cov-exog} and \Cref{ass:bias-z}.

\paragraph{Term (C):}  First, consider the expression in Term (C) for $d=0$.  Notice that,
\begin{align*}
    \E[\Delta Y_{t^*} | X_{t^*}, X_{t^*-1}, D=0] &= \E[\Delta Y_{t^*}(0) | X_{t^*}(0), X_{t^*-1}, D=0] \\
    &= \E[\Delta Y_{t^*}(0) | \Delta X_{t^*}(0), D=0] \\
    &= \E[\Delta Y_{t^*} | \Delta X_{t^*}, D=0]
\end{align*}
where the first equality holds by replacing observed counterparts with their corresponding potential outcomes, the second equality holds by \Cref{ass:bias-delta}(b), and the third equality holds by writing potential outcomes in terms of their observed counterparts.  Second, consider the case when $d=1$, so that
\begin{align*}
    \E[\Delta Y_{t^*} | X_{t^*}, X_{t^*-1}, D=1] &= \Big(\E[\Delta Y_{t^*} | X_{t^*}, X_{t^*-1}, D=1] - \E[\Delta Y_{t^*}(0) | X_{t^*}(0), X_{t^*-1}, D=1] \Big) \\
    &\hspace{25pt} + \Big( \E[\Delta Y_{t^*}(0) | X_{t^*}(0), X_{t^*-1}, D=1] - \E[\Delta Y_{t^*}(0) | X_{t^*}(0), X_{t^*-1}, D=0]\Big) \\
    & \hspace{25pt} + \E[\Delta Y_{t^*}(0) | X_{t^*}(0), X_{t^*-1}, D=0] \\
    &= ATT_{X_{t^*}(0), X_{t^*-1}}(X_{t^*}(0), X_{t^*-1}) + 0 + \E[\Delta Y_{t^*}(0) | \Delta X_{t^*}(0), D=0] \\
    &= \E[Y_{t^*}(1) - Y_{t^*}(0) | \Delta X_{t^*}(0), D=1] + \E[\Delta Y_{t^*}(0) | \Delta X_{t^*}(0), D=0] \\
    &= \Big(\E[\Delta Y_{t^*} | \Delta X_{t^*}, D=1] - \E[\Delta Y_{t^*}(0) | \Delta X_{t^*}(0), D=1]\Big) \\
    &\hspace{25pt} + \E[\Delta Y_{t^*}(0) | \Delta X_{t^*}(0), D=0] \\
    &=\E[\Delta Y_{t^*} | \Delta X_{t^*}, D=1]
\end{align*}
where the first equality holds by adding and subtracting terms, the second equality holds by \Cref{ass:cov-exog}, the definition of $ATT_{X_{t^*}(0),X_{t^*-1}}$, \Cref{ass:bias-delta}(b), and the middle term is equal to zero by the same arguments as were used for Term (B) above, the third equality holds by \Cref{ass:bias-delta}(a), the fourth equality holds by adding and subtracting terms and by \Cref{ass:cov-exog}, and the last equality holds because
\begin{align*}
    \E[\Delta Y_{t^*}(0) | \Delta X_{t^*}(0), D=1] &= \E[ \E[\Delta Y_{t^*}(0) | X_{t^*}(0), X_{t^*-1}, Z, D=1] | \Delta X_{t^*}(0), D=1] \\
    &= \E[ \E[\Delta Y_{t^*}(0) | X_{t^*}(0), X_{t^*-1}, Z, D=0] | \Delta X_{t^*}(0), D=1] \\
    &= \E[ \E[\Delta Y_{t^*}(0) | \Delta X_{t^*}(0), D=0] | \Delta X_{t^*}(0), D=1] \\
    &= \E[\Delta Y_{t^*}(0) | \Delta X_{t^*}(0), D=0]
\end{align*}
where the first equality holds by the law of iterated expectations, the second equality holds by \Cref{ass:conditional-parallel-trends}, the third equality holds by \Cref{ass:bias-z}(b) and \Cref{ass:bias-delta}(b), and the last equality holds because, conditional on $\Delta X_{t^*}(0)$, the inside conditional expectation is nonrandom.  Thus, Term (C) is equal to zero under \Cref{ass:cov-exog,ass:bias-z,ass:bias-delta}.

\paragraph{Term (D):} First, consider the expression in Term (D) for $d=0$.  Notice that
\begin{align*}
    \E[\Delta Y_{t^*} | \Delta X_{t^*}, D=0] &= \E[\Delta Y_{t^*}(0) | \Delta X_{t^*}(0), D=0] \\
    &= \Delta X_{t^*}(0)'\delta_0 \\
    &= \L(\Delta Y_{t^*} | \Delta X_{t^*}, D=0) 
\end{align*}
where the first equality holds by writing observed outcomes and time-varying covariates in terms of potential outcomes/covariates, the second equality holds by \Cref{ass:bias-linearity}(b), and the last equality holds by the definition of linear projection.  Next, consider the expression in Term (D) for $d=1$,
\begin{align*}
    \E[\Delta Y_{t^*} | \Delta X_{t^*}, D=1] &= \Big(\E[\Delta Y_{t^*} | \Delta X_{t^*}, D=1] - \E[\Delta Y_{t^*}(0) | \Delta X_{t^*}(0), D=1] \Big) + \E[\Delta Y_{t^*}(0) | \Delta X_{t^*}(0), D=1] \\
    &= ATT_{\Delta X_{t^*}(0)}(\Delta X_{t^*}(0)) + \E[\Delta Y_{t^*}(0) | \Delta X_{t^*}(0), D=1] \\
    &= \Delta X_{t^*}(0)'(\delta_1 + \delta_0) \\
    &= \L(\Delta Y_{t^*} | \Delta X_{t^*}, D=1)
\end{align*}
where the first equality holds by adding and subtracting terms, the second equality holds using similar arguments as for previous terms and uses \Cref{ass:conditional-parallel-trends,ass:cov-exog,ass:bias-z,ass:bias-delta}, the third equality holds by \Cref{ass:bias-linearity}, and the last equality holds by the definition of linear projection where the linear projection coefficient is given by $\delta_1+\delta_0$.

This completes the first part of the proof.  Next, we prove additional result (a) in \Cref{prop:twfe-extra-assumptions}.  Toward this end, recall that,
\begin{align*}
    \omega_e(X^{all}(0)) = \frac{ (p(X^{all}(0)) - \L(D|\Delta X))}{\E[(1-\L(D|\Delta X_{t^*}))|D=1]}
\end{align*}
Under \Cref{ass:bias-pscore}, $p(X^{all}(0)) = \L(D|\Delta X_{t^*})$ which implies that $\omega_e(X^{all}(0)) = 0$.  Next, recall that
\begin{align*}
    \omega_{ATT}(X^{all}(0)) &= \frac{1-p(X^{all}(0))}{\E[(1-\L(D|\Delta X))|D=1]} \\
    &= \frac{1-p(X^{all}(0))}{\E[(1-p(X^{all}(0))|D=1]}
\end{align*}
where the second equality holds under \Cref{ass:bias-pscore}.  It immediately follows that $\E[\omega_{ATT}(X^{all}(0))|D=1]=1$
This completes the proof of additional result (a).  Now, we move to proving additional result (b) in \Cref{prop:twfe-extra-assumptions}.  Under \Cref{ass:sampling,ass:conditional-parallel-trends,ass:overlap,ass:cov-exog,ass:bias-z,ass:bias-delta,ass:bias-linearity,ass:bias-pscore}, we have shown that 
\begin{align*}
    \alpha &= \E[\omega_{ATT}(X^{all}(0)) ATT_{\Delta X_{t^*}(0)}(\Delta X_{t^*}(0)) | D=1] \\
    &= ATT \, \E[\omega_{ATT}(X^{all}(0)) | D=1] \\
    &= ATT
\end{align*}
where the second equality holds when $ATT_{\Delta X_{t^*}(0)}$ does not vary across different values of $\Delta X_{t^*}(0)$, and the last equality holds because the weights have mean one.

\begin{additional-material}
\section{Notes}

\begin{itemize}
    \item Would be nice to have a more concise notation that includes time varying and time invariant covariates.  Think about this, but one possibility is $Z=(X_{t-1}, \tilde{X})$ where $\tilde{X}$ is a vector of time invariant covariates.
    \item If more than two time periods are available, pre-test the parallel trends assumption and the assumptions about the paths of covariates.
    \item Defining conditional treatment effects is trickier here than in the case with time invariant covariates only. In the main case in the current paper, one could consider $ATT_0(x_0) = E[Y_t(1) - Y_t(0)|X(0)=x_0, D=1]$ or $ATT_0(x_1) = E[Y_t(1) - Y_t(0)|X(1)=x_1, D=1]$.
    \item \Cref{ass:cov-unc} implies that ATT is identified, it is not strong enough to identify $ATT_X(x)$. It is strong enough to identify $ATT^0(x_0)$ but this is not an ATT for an observable group in the population.
    \item It might be interesting to find an application where we could nonparametrically estimate a lot of these conditional expectations and compute how much bias arises from each term.
    \item the identification challenge is for 
    \begin{align*}
        E[Y_t(0)|D=1] = E[\Delta Y_t(0) | D=1] + E[Y_{t-1}|D=1]
    \end{align*}
    The last term is directly identified, so we focus on the first term on the RHS:
    \begin{align*}
        E[\Delta Y_t(0) | D=1] = (\theta_t - \theta_{t-1}) + E[\Delta X_t(0) | D=1]\beta
    \end{align*}
    A sufficient condition for this procedure to deliver the ATT is that $E[\Delta X_t(0) | D=1] = E[\Delta X_t(0) | D=0]$ (but do you get the same thing as in the regression case here...?).  Under this assumption,
    \begin{align*}
        ATT = E[\Delta Y_t|D=1] - E[\Delta Y_t|D=0]
    \end{align*}
    so you get the same thing, that is interesting..., it does suggest that you do not actually need to run the regression, and here we do not put any structure on how treated potential outcomes are generated (unlike TWFE)
    \item maybe put some quotes from papers about why they condition on covariates... (e.g., from cheng and hoekstra or others)
    \item one of pedro's comments was about whether we actually needed the functional form assumption for $\Delta X_t(0)$ in that case for the result to go through.  I think the answer is yes.  You can show this by use nonparametric arguments and then assuming something like $\E[\Delta X_t(0)|Z,X_{t-1},D=0] = Z^2\lambda$ which would imply that you can't just run a regression on $Z$ and $X_{t-1}$ due to the nonlinearity.
\end{itemize}

\subsection*{TODO}

Besides the long reading list below, here is a recap of what I think we need to be working on:

\begin{itemize}
    \item Work on Corollary 2.  In particular, we need a proof of the first part (which I think should be straightforward).  For the second part, we need to find something that is doubly robust, and make a connection to existing work on double robustness under sequential exogeneity.
    \item Work on interpreting regressions.  I was quickly scanning through some of the papers mentioned below, and I think they suggest that we are on the right track there (I think the Angrist paper is a famous reference, but the Aronow paper and the Goldsmith-Pinkham papers both have easy-to-follow and relevant discussions.  There are still some issues related to confirming that weights are exactly right and interpreting them though.
    \item Machine learning --- I'd like to go for this, but it probably involves better figuring out the expressions in Corollary 2, which might be a nightmare to try to estimate as currently constructed.
    \item Application.  As I said earlier, I think we should defer a bit on this one, but we'll eventually need a lot of work here too.
\end{itemize}

\subsection*{Reading List}

This paper studies a similar problem to what we do (though there are still important differences, we should cite as being similar):

\begin{itemize}
    \item \fullcite{zeldow-hatfield-2021} 
    
    **NOTE: I could not find the Appendix, so this review may be incomplete.**
    
    This paper examines confounding variables and looks at common regression and matching methods for dealing with confounding variables in a DiD context.
    
    The paper says that a confounding variable in DiD is a variable that either has a time-varying effect on the outcome or has a time-varying difference between the treatment and control groups.
    
    For time-invariant confounders, the variable must thus have a time-varying effect. As such, simply controlling for the main effect of the confounding variable is insufficient, and researchers should instead also control for the interaction between the variable and time.
    
    For time-varying confounders, conditioning on the variable's main effect or on its interaction with time runs the risk of conditioning on post-treatment covariates that are impacted by the treatment. However, failing to control for the covariate may lead to violations of the parallel trends assumption.
    
    Matching on pre-treatment outcomes can reduce bias in some scenarios and increase bias in others by inducing regression to the mean. Matching on pre-treatment covariates is also able to control for diverging trends, but this would not work if the confounding was due to time-varying covariates.
    
    The paper includes several simulations to test the ability of six common regression and matching methods to deal with different confounding scenarios. The models are the following:
    
    1. A simple, unadjusted DiD estimator.
    
    2. A covariate regression adjusted DiD estimator.
    
    3. A time-varying covariate regression adjusted DiD estimator.
    
    4. A DiD estimator that matches based on pretreatment outcomes.
    
    5. A DiD estimator that matches based on pretreatment outcome first differences.
    
    6. A DiD estimator that matches based on pretreatment covariates.
    
    Note that all of the matching estimators used nearest neighbor matching.
    
    For a time-invariant covariate effect (which has no confounding), none of the estimators had significant bias. For a time-varying covariate effect, only models 3 and 6 had essentially no bias. For a treatment-independent covariate (which has no confounding), all of the models were unbiased except for model 4, and model 3 had about 10\% lower variance than the simple, unadjusted model.
    
    For a time-varying covariate with parallel evolution and constant effect of X on Y (which has no confounding), all methods were unbiased. However, for a time-varying covariate with parallel evolution but a changing effect of X on Y, only model 3 was unbiased.
    
    For a time-varying covariate with diverging evolution by group but the effect of X on Y is constant, models 2 and 3 are unbiased. For a time-varying covariate with diverging evolution by group and the effect of X on Y is changing, only model 3 is unbiased.
    
    For an evolution that diverges after the treatment is introduced, all methods are biased, regardless of whether the effect of X on Y is constant.
    
    Notably, matching estimators never perform than regression adjusted estimators (although it is important to remember there are other matching methods besides nearest neighbor).

    \point{Mostly harmless talks about "bad controls" (though only very briefly in the context of panel data), which is very closely related to what we are talking about here, but they don't appear to include any references.  In addition, the example of a bad control that is used in mostly harmless is controlling for occupation in a regression of earnings on education.}
\end{itemize}

\paragraph{Sequential Exogeneity / Mediation Analysis}

This is very incomplete list; feel free to add papers that seem relevant

\begin{itemize}
    \item Lechner, Michael. "Sequential causal models for the evaluation of labor market programs." Journal of Business and Economic Statistics 27.1 (2009): 71-83.
    
    This article develops estimators that build off of Lechner and Miquel (2001) for the model proposed by Robins (1986). The goal of these estimators is to overcome selection issues that may plague the static model when dealing with situations such as treatments in multiple time periods or continuous treatments. The article uses IPW estimation of dynamic causal models. One advantage of this model is that it does not necessitate modeling the dependence of the outcomes and the confounders.
    
    \item \fullcite{lechner-2008}
    \item \fullcite{flores-lagunes-2009}
    \item \fullcite{rosenbaum-1984}
    
    This paper examines when it is appropriate to control for concomitant variables that may have been adjusted by the treatment to control for confounding variables that may be unobserved.
    
    Define $R_t$ and $S_t$ to be the outcome variable and concomitant variable, respectively, where t = 0 if untreated and t = 1 if treated. Let X be observed covariates. Let Z be a binary variable that equals 1 if the unit was treated and 0 otherwise.
    
    Consider the following "strongly ignorable" condition:
    
    \begin{equation}
        (R_1, R_0) \perp Z | X
    \end{equation}
    
    and
    
    \begin{equation}
        0 < Pr(Z = 1 | X) < 1 for all X
    \end{equation}
    
    If this is true, then
    
    \begin{equation}
        \bar{\delta} = E_X{E(R_1 | Z = 1, X) - E(R_0 | Z = 0, X)}
    \end{equation}
    
    is equal to the average treatment effect. Adjusting for concomitant variables can introduce bias, but there are certain cases in which it does not. Notably, adjusting for concomitant variables does not introduce bias when the concomitant variables are unaffected by treatment and treatment assignment is strongly ignorable given X. However, this also implies that it is sufficient to control for X to recover the average treatment effect, making controlling for $S_t$ only unbiased when it is unnecessary.
    
    Controlling for $S_t$ may be worthwhile if $S_t$ can act for a surrogate of an unobserved confounding variable U. $(S_0, S_1)$ can be a surrogate of U if
    
    \begin{equation}
        R_t \perp U | (S_t, X) \text{for t= 0, 1}
    \end{equation}
    
    Other assumptions include that treatment assignment is strongly ignorable for ${(R_1, S_1), (R_0, S_0)}$ given (X, U) and that $(S_1, S_0)$ are unaffected by the treatment. These assumptions together are quite strong but may make sense in certain scenarios.
    
    One alternative method is to take the difference of the average treatment effect calculated after controlling only for X and the average treatment effect after controlling for X and $S_t$. This implicitly assumes that the true average treatment effect is between the two, but this is only guaranteed to be true with additional assumptions. However, some researchers still use this in analyses of how sensitive conclusions are to assumptions. Using data from previous studies and/or logical arguments may yield a possible plausible range of average treatment effect values.
    
    \item Viviano, Davide, and Jelena Bradic. "Dynamic covariate balancing: estimating treatment effects over time." arXiv preprint arXiv:2103.01280 (2021). (includes references to double robustness)
    \item (some paper that does double robustness in this context)
\end{itemize}

\paragraph{Interpreting Regressions}

\begin{itemize}
    \item Angrist, Joshua D. "Estimating the Labor Market Impact of Voluntary Military Service Using Social Security Data on Military Applicants." Econometrica 66.2 (1998): 249-288.
    \item \fullcite{sloczynski-2020}
    
    This paper looks at interpreting weighting for the OLS and TSLS estimands. Notably, as more units are treated, less weight is placed on the effect of the treated. This paper imposes a linearity assumption on propensity scores, which allows the weights to be mean 1.
    
    The bias from imposing linearity can be written as:
    
    \begin{equation}
        B_l = w_0 * (\tau_{APE|d=0} - \tau_{ATC}) + w_1 * (\tau_{APE|d=1} - \tau_{ATT})
    \end{equation}
    
    The bias from heterogeneity can be written as:
    
    \begin{equation}
        B_h = w_0 * (\tau_{ATC} - \tau_{ATT})
    \end{equation}
    
    The paper focuses mainly on correcting the bias from heterogeneity, but Sloczynski argues that the bias from linearity is relatively unimportant in the applications he is considering, while conceding it may be more substantial in other empirical applications.
    
    The paper proposes a weighted least squares regression. First make the following assumptions:
    
    Assumption 1: (i) $E(y^2)$ and $E(||X||^2)$ are finite, (ii) The covariance matrices of X and (d, X) are nonsingular.
    
    Assumption 2: V[p(X) | d = 1] and V[p(X) | d = 0] are nonzero, where V(-|-) denotes the conditional variance (with respect to E[p(X) | d = j], j = 0, 1).
    
    Also, let $w = \frac{1-\rho}{w_0} * d + \frac{\rho}{w_1}*(1 - d)$. Then,
    
    \begin{equation}
        \tau_w = \tau_{APE}
    \end{equation}
    
    The paper shows that the OLS and TSLS estimands are equivalent to convex combinations of two other parameters. These parameters can be interpreted as ATT and ATC.
    
    To estimate ATE, the proportion of treated units should be similar to the proportion of control units. To estimate ATT, the proportion of treated units should be very small.
    
    \item Aronow, Peter M., and Cyrus Samii. "Does regression produce representative estimates of causal effects?." American Journal of Political Science 60.1 (2016): 250-267.
    
    This paper argues that regressions do not produce externally valid results because observations are weighted differently. This results in an effective sample that does not match the population.
    
    The authors make the following assumptions, which they say are favorable to regressions:
    \begin{equation}
        Y_i(d) = Y_i(0) + \tau_i * d
    \end{equation}
    
    \begin{equation}
        (Y_i(0), \tau_i) \perp D_i | X_i
    \end{equation}
    
    \begin{equation}
        E[D_i | X_i] = V_i\omega],
    \end{equation}
    
    where $V_i = (1 X_i)$ and $\omega$ is a K + 1 column vector of coefficients.
    
    The authors find that the weighted average of causal effects from partial regression is the following:
    
    \begin{equation}
        \hat{\beta} \overset{p}{\to} \frac{E[w_i\tau_i]}{E[\tau_i]}
    \end{equation}
    
    where $w_i = (D_i - E[D_i | X_i])^2$
    
    In addition, using the definition of conditional variance,
    
    \begin{equation}
        E[w_i | X_i] = Var[D_i | X_i]
    \end{equation}
    
    To estimate the mean of a covariate Z in the effective sample,
    
    \begin{equation}
        \hat{\mu}(Z_i) = \frac{\sum_{i=1}^n\hat{w_i}Z_i}{\sum_{i=1}^n \hat{w_i}} \overset{p}{\to} \frac{E[w_iZ_i]}{E[w_i]} = \mu(Z_i)
    \end{equation}
    
    Researchers may be able to uncover the causal effect in the target population if the following assumptions hold:
    
    \begin{equation}
        (Y_i(d), Y_i(d')) \perp D_i | X_i
    \end{equation}
    
    \begin{equation}
        Pr[D_i = d | X__i = x] > 0, Pr[D_i = d' | X_i = x] > 0
    \end{equation}
    
    If these hold, than the average causal effect for the target population is:
    
    \begin{equation}
        \bar{\tau}(d, d') = E[\tau_i(d, d')] = \frac{E[Y_i(d') - E[Y_i(d)]]}{d' - d}
    \end{equation}
    
    \item Humphreys, Macartan. "Bounds on least squares estimates of causal effects in the presence of heterogeneous assignment probabilities." Manuscript, Columbia University (2009).
    \item Goldsmith-Pinkham, Paul, Peter Hull, and Michal Kolesár. "On Estimating Multiple Treatment Effects with Regression." arXiv preprint arXiv:2106.05024 (2021).
    
    \item Ishimaru, Shoya. "Empirical Decomposition of the IV-OLS Gap with Heterogeneous and Nonlinear Effects." arXiv preprint arXiv:2101.04346 (2021).
    
    \item Ishimaru, Shoya. "What Do We Get From A Two-Way Fixed Effects Estimator? Implications From A General Numerical Equivalence." arXiv preprint arXiv:2103.12374 (2021).
\end{itemize}

\paragraph{Machine Learning}

\begin{itemize}
    \item \fullcite{chang-2020}
    \item Victor Chernozhukov, Denis Chetverikov, Mert Demirer, Esther Duflo, Christian Hansen, Whitney Newey, James Robins, Double/debiased machine learning for treatment and structural parameters, The Econometrics Journal, Volume 21, Issue 1, 1 February 2018, Pages C1–C68
    \item Semenova, Vira, and Victor Chernozhukov. "Debiased machine learning of conditional average treatment effects and other causal functions." The Econometrics Journal 24.2 (2021): 264-289.
    \item Colangelo, Kyle, and Ying-Ying Lee. "Double debiased machine learning nonparametric inference with continuous treatments." arXiv preprint arXiv:2004.03036 (2020).
    \item Su, Liangjun, Takuya Ura, and Yichong Zhang. "Non-separable models with high-dimensional data." Journal of Econometrics 212.2 (2019): 646-677.
    \item Chernozhukov (2018) — this paper is discussed a lot in Chang (2020).
\end{itemize}

\end{additional-material}

\end{document}